\newcommand {\T} {\theta}
\newcommand {\ol} {\overline}
\newcommand {\ul} {\underline}
\newtheorem{thm}{Theorem}[section]
\newtheorem{prop}{Proposition}[section]
\newtheorem{lm}{Lemma}[section]
\newcommand {\A} {\alpha}
\newcommand {\ud}{\mathrm{d}}
\newcommand {\E} {\mathbb{E}}
\newcommand {\pa} {\partial }
\newcommand {\pt} {\partial_\theta }
\newcommand {\ptt} {\partial^2_{\theta \theta^T} }
\newcommand{\sv}{\sigma_v}
\newcommand{\su}{\sigma_u}
\newcommand {\les} {\lesssim}
\newcommand\argmin[1]{\underset{#1}{\operatorname{argmin}}}
\newcolumntype{Y}{>{\small\raggedright\arraybackslash}X}
\title{Robust Estimation in Stochastic Frontier Models}
\author[1]{Junmo Song\thanks{Both authors contributed equally to this work.
  E-mails: jmsong@jejunu.ac.kr, donghyun.oh@inha.ac.kr}}
\author[2]{Dong-hyun Oh$^*$}
\author[3]{Jiwon Kang\thanks{Corresponding author. E-mail: jwkang.stats@gmail.com}}
\affil[1]{Department of Computer Science and Statistics, Jeju National University, Jeju, Korea}
\affil[2]{Department of Industrial Engineering, Inha University, Incheon, Korea}
\affil[3]{Research Institute for Basic Sciences, Jeju National University, Jeju, Korea}
\begin{document}
\maketitle

\begin{abstract}
This study proposes a robust estimator for stochastic frontier models by integrating the idea of Basu et al. [1998, Biometrika 85, 549-559] into such models. We verify that the suggested estimator is strongly consistent and asymptotic normal under regularity conditions and investigate robust properties. We use a simulation study to demonstrate that the estimator has strong robust properties with little loss in asymptotic efficiency relative to the maximum likelihood estimator. A real data analysis is performed for illustrating the use of the estimator.
\end{abstract}
\noindent{\bf Keywords} Stochastic frontier model; outliers; robustness; minimum density power divergence estimator\\
\noindent{\bf JEL Classification} C13; D24


\section{Introduction}\label{sec:introduction}
 Technical efficiency (TE) measures have been used for several decades for benchmarking purposes. The concept of TE was first introduced by Farrell (1957). Since then, two strands of TE measurement developed in the late 1970s and early 1980s: data envelopment analysis (DEA), based on linear programming, and stochastic frontier analysis (SFA), which commonly uses parametric stochastic frontier (SF) models.

The DEA technique is mainly used to measure TE scores in the research fields of managerial and economics studies. Since DEA often requires only input and output quantities, it is quite easy to understand the technique's empirical results and to apply these results to any empirical investigations. However, a weakness of DEA is that it is sensitive to extreme values, making it difficult to apply the technique to data sets with outliers. Several attempts have been made to solve this problem. For example, Wilson (1993, 1995) suggested a method for detecting outliers and Cazals et al. (2002) proposed a robust estimator for the nonparametric frontier model. Simar (2003) employed the method of Cazals et al. (2002) to detect outliers using classic DEA estimators. Florens and Simar (2005) also proposed robust parametric estimators of nonparametric frontiers.

The SFA framework is a counterpart to the DEA in that it is a parametric approach. This means that the functional form, such as production or cost functions, needs to be assumed before estimating the TE score. One of the pioneering methodologies in the SFA framework was developed by Jondrow et al. (1982), who proposed a formula for separating a random error component and a TE component. Owing to the ease of application, various models have been developed and SF models have been widely employed in efficiency measurement studies. For example, the approach suggested by Battese and Coelli (1995) provides the TE and the determinants of the TE. Numerous statistical methods have been proposed for estimating SF models. For example, Park and Simar (1994) and  Park et al. (1998) considered semiparametric estimation in SF panel models and Kumbhakar et al. (2007) introduced an approach for nonparametric SF models. Kopp and Mullahy (1990) and Van den Broeck et al. (1994) applied the generalized method of moments procedure and Bayesian method, respectively, to parametric SF models. Kneip et al. (2015) proposed an alternative and new approach for nonparametric SF models using penalized likelihood.

This study addresses the estimation of parametric SF models, particularly in the presence of high- or low-performing observations. In empirical data analyses, one often faces observations with a comparative advantage, such as highly advanced technology, which yield a super efficiency score. These observations should be treated carefully because they can influence the estimation procedure in the same way as outliers do. As is widely recognized in the literature, the maximum likelihood (ML) estimation method is influenced strongly by outliers or extreme values. Our simulation shows that applying the ML estimator to the SF model suffers from the same problem, requiring the development of a robust estimation method for SF models. However, to the best of our knowledge, little effort has been made in this regard.

The purpose of this study is to propose a robust estimator for SF models. To construct a robust estimator, we consider the estimation method based on divergence, which evaluates the discrepancy between any two probability distributions. The divergence-based estimation method has been used successfully in constructing robust estimators in the past. For a review, refer to Pardo (2006)  and Cichocki and Amari (2010), as well as the references therein. In this study, we employ density power divergence, as proposed by Basu et al. (1998) (henceforth, BHHJ). BHHJ proposed a minimum density power divergence (MDPD) estimator, and demonstrated that it possesses, relative to the ML estimator, strong robust properties with little loss in asymptotic efficiency. Compared with other robust methods, such as the minimum Hellinger distance estimation, the BHHJ method does not require any smoothing methods. Hence, it avoids the difficulty of selecting a bandwidth when estimating the nonparametric density estimation. For this reason, the BHHJ method can be applied conventionally to any parametric models to which the ML estimation can be applied. For example, see Ju\'{a}rez and Schucany (2004), Fujisawa and Eguchi (2006), and  Kim and Lee (2013).

The remainder of the paper is organized as follows. Section 2 reviews the BHHJ estimation method and proposes a robust estimator for SF models based on density power divergence. This section also examines the asymptotic and robust properties of the proposed estimator. In Section 3, we discuss our simulation study that compares the performance of the conventional ML estimator and the MDPD estimator in the SFA framework. In Section 4, we analyze real data that contain some low-performing observations using both estimators, again for comparative purposes. Lastly, Section 5 concludes the paper.

\section{Robust estimation in the stochastic frontier models}\label{sec:robust-estim-stoch}
This section reviews the MDPD estimator and integrates it into the SFA framework in order to estimate the TE.

\subsection{Minimum density power divergence estimator}\label{subsec:minim-dens-power}
In this subsection, we review the BHHJ estimation procedure that minimizes a density-based divergence measure.

Let $f$ and $g$ be probability densities. To measure the difference between $f$ and $g$, BHHJ defined the density power divergence, $d_\alpha(f,g)$, as follows:
\begin{eqnarray}\label{2.1.1}
d_\alpha (g, f):=\left\{\begin{array}{lc}
\displaystyle\int\left\{f^{1+\alpha}(z)-(1+\frac{1}{\alpha})\,g(z)\,f^\alpha(z)+\frac{1}{\alpha}\,g^{1+\alpha}(z)\right\}\ud
z &,\alpha>0, \vspace{0.3cm}\\
\displaystyle\int g(z)\left\{ \log g(z)-\log f(z) \right\}\ud z
&,\alpha=0.
\end{array} \right.
\end{eqnarray}
Note that the divergence includes Kullback--Leibler divergence
and $L_2$-distance as special cases. Since  $d_\alpha(f,g)$ converges to $d_0(f,g)$ as $\A\rightarrow 0$, the above divergence with
$0<\alpha<1$ provides a smooth bridge between the Kullback--Leibler divergence and the $L_2$-distance.

Consider a family of parametric distributions $\{ F_{\theta} : \theta \in
\Theta \subset \mathbb{R}^m \}$ possessing densities $\{ f_{\theta}
\}$ with respect to the Lebesgue measure, and let $\mathcal{G}$ be the class of all distributions having densities with respect to the Lebesgue measure.
For a distribution $G\in \mathcal{G}$ with density $g$, the MDPD functional at $G$ (i.e., $T_{\alpha} (G)$) with respect to $\{ F_{\theta} : \theta \in \Theta\}$  is defined by
\begin{eqnarray}\label{2.1.2}
T_{\alpha}(G) = \argmin{\theta \in \Theta}\  d_{\alpha} (g, f_{\theta}),
\end{eqnarray}
where it is assumed that $T_{\alpha}(G)$ exists and is unique, as will normally be the case. Note that when $G$ belongs to $\{ F_{\theta} \}$ (i.e., $G=F_{\theta'}$ for some  $\theta'\in \Theta$), $T_{\alpha}(G)$ becomes $\theta'$. Roughly speaking, $F_{T_\alpha(G)}$ can be considered as a projection of $G$ onto the space of $\{ F_{\theta} : \theta \in \Theta\}$ in terms of the divergence, and $T_\alpha(G)$ becomes the target parameter of the MDPD estimator below.

Given a random sample $X_1, \cdots, X_n$ with unknown density $g$, the
MDPD estimator for the parameter $T_\alpha(G)$ is defined as an empirical version of (\ref{2.1.2}). That is,
\begin{eqnarray}\label{2.1.3}
\hat \theta_{\alpha, n} = \argmin{\theta \in \Theta}\, \frac{1}{n} \sum_{i=1}^n H_{\alpha}
(X_i;\theta),
\end{eqnarray}
where \begin{eqnarray*}
H_\alpha(X_i;\theta) = \left\{ \begin{array}{ll}
   \displaystyle  \int f_\theta^{1+\alpha}(z) dz - \left( 1 + \frac{1}{\alpha} \right)
     f_\theta^{\alpha}(X_i)    & \mbox{, $\alpha > 0$,}\vspace{0.15cm}\\
   \displaystyle  - \log f_\theta(X_i)      & \mbox{, $\alpha = 0$.}
   \end{array}
 \right.
\end{eqnarray*}
BHHJ showed that $\hat \theta_{\alpha,n}$ is weakly consistent with $T_\alpha(G)$ and asymptotically normal, and demonstrated that the estimator has strong robust properties. The robust property of the estimator can be understood by checking the following estimating equation:
\[ \frac{1}{n}\sum_{i=1}^n \frac{\pa}{\pa \theta} H_\alpha(X_i;\theta)=(1+\A)\int U_\theta(z) f_\theta^{1+\A}(z)dz-\frac{1}{n}\sum_{i=1}^n U_\theta(X_i) f_\theta^\A(X_i)=0,\]
where $U_\theta(x)=\frac{\pa}{\pa \theta} \log f_\theta(x)$. Comparing the estimating equation of the ML estimator (i.e., $\sum_{i=1}^n U_\theta(X_i)=0$), one can see that the MDPD estimator provides density power weight, $f_\theta^\A (X_i)$, to each $U_\theta(X_i)$, whereas the ML estimator gives the equal weight. This means that the robustness of the MDPD estimator is obtained by providing a down-weight to the outliers.
Indeed, $\alpha$ controls the trade-off between robustness and asymptotic efficiency in the estimation procedure. In the literature that applies the BHHJ procedure to other statistical or econometric models, the MDPD estimators show good robustness against outliers, while still having a high efficiency relative to the ML estimator, especially when the true distribution belongs to $\{ F_{\theta}\}$ and $\alpha$ is close to 0. For example, Ju\'{a}rez and Schucany (2004) and Fujisawa and Eguchi (2006) applied the procedure to the generalized Pareto distribution and the normal mixture distribution, respectively. Lee and Song (2009, 2013) introduced the MDPD estimator for the GARCH and diffusion models, respectively, and Kim and Lee (2013) employed the estimation method for the copula parameter in the SCOMDY models.
Since the estimator with $\alpha>1$ causes a significant loss of efficiency, estimations with $\alpha\in [0,1]$ are commonly employed.

This approach can be easily extended to estimations in regression models. Let $\{f_{\theta}(y|x) \}$ be a family of regression models with a parameter $\theta \in \Theta$, and let $g(y|x)$
be the true density for $Y$, given $X=x$. Then, a family of the
$x$-conditional version of the density power divergence is defined as
\begin{eqnarray*}
d_{\alpha}(g(\cdot|x), f_{\theta}(\cdot|x)) = \left\{
\begin{array}{ll}
     \displaystyle\int \left\{ f_{\theta}^{1+\alpha}(y|x) - \left( 1 + \frac{1}{\alpha} \right)
     g(y|x) f_{\theta}^{\alpha}(y|x) + \frac{1}{\alpha} g^{1+\alpha}(y|x) \right\}
     dy     & \mbox{, $\alpha > 0$}\\\vspace{-0.3cm} \\
     \displaystyle\int g(y|x) \left\{ \log g(y|x) - \log f_{\theta}(y|x) \right\}
     dy     & \mbox{, $\alpha = 0$.}
   \end{array}
 \right.
\end{eqnarray*}
Given observations $\{(X_i, Y_i)\}_{i=1}^n$, the above divergence makes it possible to employ the MDPD estimators for regression models, as follows:
\begin{eqnarray}\label{2.1.4}
\hat \theta_{\alpha,n} = \argmin{\theta \in \Theta} \frac{1}{n}\sum_{i=1}^n H_\alpha(X_i,Y_i;\theta)
\end{eqnarray}
where
\begin{eqnarray*}
H_\alpha(X_i,Y_i;\theta) = \left\{
\begin{array}{ll}\displaystyle
     \int f_\theta^{1+\alpha}(y|X_i) dy - \left( 1 + \frac{1}{\alpha} \right)
      f_\theta^{\alpha}(Y_i | X_i)    & \mbox{, $\alpha >
     0$}\\ \vspace{-0.4cm} \\ \displaystyle
     - \log f_\theta(Y_i | X_i)      & \mbox{, $\alpha = 0$.}
   \end{array}
 \right.
\end{eqnarray*}
As an alternative to the ML estimation, we apply this estimator to the SF models, as described in the next subsection.

\subsection{The MDPD estimator for SF models}\label{subsec:mdpd-estim-stoch}
Consider a random sample  $\{(X_i,Y_i)\}_{i=1}^n$ with $X_i \in \mathbb{R}^p$ and $Y_i \in \mathbb{R}$, satisfying the following stochastic frontier model:
\begin{eqnarray}\label{2.2.1}
Y_{i}=g(X_{i}, \beta) + V_{i} -U_{i},\quad i=1,\cdots,n,
\end{eqnarray}
where $g(x,\beta)$ is the frontier production function with parameter $\beta\in \mathbb{R}^q$; $V_i$ and $U_i$ are the random error term and technical inefficiency, respectively; and $\{V_i\}$ and $\{U_i\}$ are assumed to be independent.

Denoting the true density functions of $V$ and $U$ by $f_V$ and $f_U$, respectively, the true conditional density of $Y$, given $X=x$, is obtained by
 \begin{eqnarray*}
 f(y|x)=\int_0^\infty f_U(u)f_V(u+y-g(x,\beta))du.
 \end{eqnarray*}
Since it is not usually easy to specify the distributions of $V$ and $U$, we consider a class of {\it pseudo(or quasi)} distributions having parametric densities to construct the MDPD estimator. In this case, the SF model under consideration is misspecified if the true distribution of $V$ and $U$ do not belong to the given family. Let $f_\theta(y|x)$ be the conditional density induced from the pseudo parametric distributions. Then,
the MDPD estimator can be defined by inserting the pseudo conditional density $f_\theta(y|x)$ in the estimator given in (\ref{2.1.4}) and the pseudo parameter to be estimated is given by
\begin{eqnarray*}\label{PE}
\theta^*_\A:= \argmin{\theta \in \Theta}\ \E \big[d_\A (f(\cdot|X), f_\theta(\cdot|X))\big],
\end{eqnarray*}
where $\Theta$ denotes the parameter space. Note that if $V$ and $U$ are correctly specified, i.e., $f(y|x)=f_{\theta_0}(y|x)$ for some $\theta_0 \in \Theta$, it holds that $\theta^*_\A=\theta_0$ for $\A\geq0$.

In this paper, we consider the normal distribution and the truncated-normal(or the exponential) distribution as the pseudo distributions for $V$ and $U$, respectively. That is, our MDPD estimator for (\ref{2.2.1}) is constructed using the pseudo conditional densities below regardless of whether the true densities $f_V$ and $f_U$ belong to the assumed class or not.
\begin{enumerate}
\item[$\bullet$] When $N(0,\sigma_v^2)$ and $N^+(\mu,\sigma_u^2)$ are employed as the pseudo distributions for $V$ and $U$, respectively, the pseudo conditional density is given by
\begin{eqnarray}
&&f_\theta(y|x)=\frac{1}{\sigma}\Big[1-\Phi\Big(-\frac{\mu}{\sigma_u}\Big)\Big]^{-1}\phi\Big(\frac{y-g(x,\beta)+\mu}{\sigma}\Big)\Phi \Big(\frac{\mu}{\sigma\lambda}-\frac{y-g(x,\beta)}{\sigma}\lambda\Big),\label{NT}
\end{eqnarray}
where $\Phi(\cdot)$ and $\phi(\cdot)$ are the standard normal cumulative distribution and density functions, respectively; $\sigma^2=\sigma_v^2+\sigma_u^2$ and $\lambda=\sigma_u/\sigma_v$; and $\theta$ denotes $(\beta,\mu,\sigma_u,\sigma_v)$. Note that setting $\mu=0$, (\ref{NT}) reduces to the following conditional density:
\begin{eqnarray}\label{NH}
f_\theta(y|x)=\frac{2}{\sigma}\phi\Big(\frac{y-g(x,\beta)}{\sigma}\Big)\Phi \Big(-\frac{y-g(x,\beta)}{\sigma}\lambda\Big),
\end{eqnarray}
which is the conditional density of the normal -- half normal SF model.
\item[$\bullet$] When  $N(0,\sigma_v^2)$ and $Exp(1/\sigma_u)$ are considered for the pseudo distributions of $V$ and $U$, respectively, we have
\begin{eqnarray}
&&f_\theta(y|x)=\frac{1}{\sigma_u}\Phi \Big(-\frac{y-g(x,\beta)}{\sigma_v}-\frac{\sigma_v}{\sigma_u}\Big)
\exp\Big(\frac{y-g(x,\beta)}{\sigma_u}+\frac{\sigma_v^2}{2\sigma_u^2}\Big),\label{NE}
\end{eqnarray}
where $\theta$ denotes $(\beta,m,\sigma_v,\sigma_u)$.
\end{enumerate}
In the case of $\A=0$, the above estimator becomes the quasi ML (QML) estimator. Hereafter, we denote by [NT](resp. [NE]) the case in which (\ref{NT})(resp. (\ref{NE})) is adopted as the pseudo conditional density. Further, we assume that $\inf_{\theta\in\Theta}(\sigma_v\wedge\sigma_u)>0$.\\

\noindent {\bf Remark 1.}
To the best of our knowledge, the integral of $f_\theta^{1+\A}(y|x)$ in (\ref{2.1.4}) with (\ref{NT}) or (\ref{NE}) cannot be expressed by a closed form. This makes it problematic to obtain the explicit form of the above objective function. In our simulation study, we use the numerical integration method provided in \textit{R-metrics} to implement the MDPD estimator, which seems to produce sufficiently good approximation results to estimate the parameters (see Section 3).\\

\subsection{Asymptotic properties of the MDPD estimator}\label{subsec:asym-prop-mdpd}
This subsection derives the asymptotic properties of the MDPD estimator for (\ref{2.2.1}). We particularly concentrate on the estimator with $\A>0$. The following regularity conditions are required to establish the consistency.
\begin{enumerate}
\item[\bf A1.]  The parameter space $\Theta$ is compact and the pseudo parameter $\theta^*_\A\in \Theta$.
\item[\bf A2.] $\{X_i\}$ is a set of $p$-dimensional i.i.d. random vectors with density $f_X$ and that are independent of $\{U_i\}$ and $\{V_i\}$.
\item[\bf A3.] $g(x,\beta)$ is continuous in $\beta$ for all $x \in \mathbb{R}^p$.
\item[\bf A4.] $ \sup_{\theta\in\Theta}f_\theta(y|x)\leq C$ for some $C$, where $C$ does not depend on $x$ and $y$.
\end{enumerate}
\begin{thm}\label{T1}
Let $\{(X_i,Y_i)\}_{i=1}^n$  be a random sample from (\ref{2.2.1}) and suppose that assumptions {\bf A1}--{\bf A3} hold. If pseudo conditional density $f_\theta(y|x)$ satisfy {\bf A4}, then, for each $\A> 0$, the MDPD estimator $\hat{\theta}_{\alpha,n}$ defined by (\ref{2.1.4}) with the pseudo conditional density $f_\theta(y|x)$ converges almost surely to $\theta^*_\A$.
\end{thm}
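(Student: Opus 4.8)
The plan is to prove strong consistency by the standard route for M-estimators of (possibly misspecified) models: show that the sample objective converges uniformly and almost surely to a population objective that is continuous and uniquely minimized at $\theta^*_\A$, and then invoke the classical argmin-consistency theorem over the compact set $\Theta$.

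First I would reconcile the target of the estimator with a population version of the objective. Writing $H(\theta):=\E[H_\A(X,Y;\theta)]$, where the expectation is taken under the true joint law of $(X,Y)$, the law of iterated expectations together with $Y\mid X\sim f(\cdot| X)$ gives
$$H(\theta)=\E\Big[\int f_\theta^{1+\A}(y| X)\,\ud y\Big]-\Big(1+\tfrac1\A\Big)\E\Big[\int f(y| X)\,f_\theta^{\A}(y| X)\,\ud y\Big].$$
Comparing this with $\E[d_\A(f(\cdot| X),f_\theta(\cdot| X))]$ shows that the two differ only by the term $\tfrac1\A\,\E[\int f^{1+\A}(y| X)\,\ud y]$, which does not depend on $\theta$. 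Hence $\argmin{\theta\in\Theta}H(\theta)=\theta^*_\A$, and by the assumed uniqueness of the minimizer, $\theta^*_\A$ is the unique (well-separated) minimizer of the continuous map $H$.

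Next I would use \textbf{A4} to secure a bounded envelope, which is the crux of the argument. Since $f_\theta^{1+\A}(y| x)=f_\theta(y| x)\,f_\theta^{\A}(y| x)\le C^{\A}f_\theta(y| x)$, integration in $y$ gives $\int f_\theta^{1+\A}(y| x)\,\ud y\le C^{\A}$ uniformly in $(x,\theta)$, while \textbf{A4} also yields $f_\theta^{\A}(Y| X)\le C^{\A}$. Therefore $\sup_{\theta\in\Theta}|H_\A(X,Y;\theta)|\le\big(2+\tfrac1\A\big)C^{\A}$, a finite constant, so the integrand admits a constant (hence integrable) envelope. Together with \textbf{A3} and the smoothness of the pseudo densities (\ref{NT}) and (\ref{NE})---continuous in $\beta$ by \textbf{A3} and smooth in the remaining parameters because $\inf_{\theta\in\Theta}(\sigma_v\wedge\sigma_u)>0$---the map $\theta\mapsto H_\A(x,y;\theta)$ is continuous for every $(x,y)$, and dominated convergence (justified by the constant envelope) transfers this continuity to $H(\theta)$.

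With a compact $\Theta$ (\textbf{A1}), a pointwise-continuous integrand dominated by a constant envelope, and an i.i.d.\ sample (\textbf{A2}), I would invoke the uniform strong law of large numbers for continuous functions on a compact set to obtain
$$\sup_{\theta\in\Theta}\Big|\tfrac1n\sum_{i=1}^n H_\A(X_i,Y_i;\theta)-H(\theta)\Big|\longrightarrow 0\quad\text{almost surely,}$$
after which the classical argmin-consistency lemma (uniform a.s.\ convergence to a continuous limit uniquely minimized on a compact set) yields $\hat\theta_{\A,n}\to\theta^*_\A$ almost surely. The main obstacle is this uniform law of large numbers, in particular controlling the $\theta$-dependence of the integral term $\int f_\theta^{1+\A}(y| X)\,\ud y$ across all of $\Theta$ simultaneously. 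This is precisely where \textbf{A4} does the heavy lifting: by bounding the densities uniformly it collapses the envelope to a constant, so that both the domination required for the uniform law and the dominated-convergence step establishing continuity of $H$ become routine, rather than requiring delicate moment or tail conditions on $(X,Y)$.
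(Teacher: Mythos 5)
Your proposal is correct and follows essentially the same route as the paper's proof: the same reduction of $\E[H_\A(X,Y;\theta)]$ to $\E[d_\A(f(\cdot|X),f_\theta(\cdot|X))]$ plus a $\theta$-free term, the same constant envelope $\bigl(2+\tfrac1\A\bigr)C^{\A}$ extracted from \textbf{A4}, and the same uniform strong law on the compact $\Theta$ (compactness, pointwise continuity in $\theta$, integrable dominating envelope) followed by the standard argmin-consistency step.
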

\noindent {\bf Remark 2.} In the case of [NT], by the compactness of $\Theta$, we can take some constants $\underline{b}, \overline{b}, \underline{u}, \overline{u},\underline{\sigma}$ and $\overline{\sigma}$ such that $\Theta \subset [\underline{b},\overline{b}]^q \times[\underline{u},\overline{u}]\times [\underline{\sigma},\overline{\sigma}]^2$, where $0<\underline{\sigma}<\overline{\sigma}<\infty$. In what follows, without loss of generality, we assume $\Theta = [\underline{b},\overline{b}]^q \times[\underline{u},\overline{u}]\times [\underline{\sigma},\overline{\sigma}]^2$ under the case of [NT]. Similarly, when the case [NE] is considered, $\Theta$ is assumed to be $[\underline{b},\overline{b}]^q \times [\underline{\sigma},\overline{\sigma}]^2$.\\

\noindent Assumptions {\bf A1}--{\bf A3} are general conditions in practice, so it suffices  to check whether assumption {\bf A4} holds or not to ensure the consistency of the MDPD estimator.
In the cases of [NT] and [NE], one can readily get
global upper bounds for the pseudo conditional densities. That is, when [NT] is considered, we have
\[f_\theta(y|x)\leq \frac{1}{\ul{\sigma}}\Big[1-\Phi\Big(\frac{\max(|\ol{u}|,|\ul{u}|)}{\ul{\sigma}}\Big)\Big]^{-1}\phi(0).\]
When [NE] is considered, we can obtain a following upper bound:
\begin{eqnarray*}
f_\theta(y|x)&\leq&\frac{1}{\ul{\sigma}} e^{\ol{\sigma}^2/\ul{\sigma}^2}   \Big[\sup_{z>0} \Phi\Big(-\frac{z}{\ol{\sigma}}\Big)e^{z/\ul{\sigma}}+ 1\Big].
\end{eqnarray*}
Using the fact that $\Phi(x)\leq e^{-x^2/2}$ for all $x<0$, we can see that the RHS of the above inequality is finite.\\

In order to obtain the asymptotic normality, we impose additional assumptions. Through out this paper, $\pa_a$ and $\pa^2_{ab}$ denote $\frac{\pa}{\pa a}$ and $\frac{\pa^2}{\pa ab}$, respectively, and the symbol $\|\cdot\|$ denotes the $l_1$ norm for matrices and vectors.
\begin{enumerate}
\item[\bf A5.] $\theta^*_\A$ lies in the interior of $\Theta$.
\item[\bf A6.] $K_\A:=\E\big[ \pa_{\theta} H_\A (X,Y;\theta^*_\A)\,\pa_{\theta^T} H_\A(X,Y;\theta^*_\A)\big] <\infty$.
\item[\bf A7.] ${\displaystyle \E \sup_{\theta \in \Theta} \| \pa^2_{\theta \theta^T} H_\A(X,Y;\theta)\| <\infty}$.
\item[\bf A8.] $J_\A:=\E\big[ \pa^2_{\theta \theta^T} H_\A (X,Y;\theta^*_\A)\big]$ is positive definite.
\end{enumerate}
Then, we have the second asymptotic result of the MDPD estimator.
\begin{thm}\label{T2}
Assume that assumptions {\bf A1}--{\bf A8} hold. Then, for each $\A > 0$,
\[ \sqrt{n}(\hat{\theta}_{\alpha,n}-\theta^*_\A)\ \stackrel{d}{\longrightarrow}\ N(0, J_\A^{-1} K_\A J_\A^{-1} ).\]
\end{thm}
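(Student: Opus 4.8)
The plan is to follow the classical $M$-estimation route for asymptotic normality, exploiting the fact that (by Theorem~\ref{T1}) the estimator $\hat\theta_{\alpha,n}$ is strongly consistent for $\theta^*_\A$, which by assumption {\bf A5} lies in the interior of $\Theta$. Since $\hat\theta_{\alpha,n}$ minimizes the empirical criterion $\frac{1}{n}\sum_{i=1}^n H_\A(X_i,Y_i;\theta)$, for $n$ large enough the minimizer is interior and the first-order condition $\frac{1}{n}\sum_{i=1}^n \pt H_\A(X_i,Y_i;\hat\theta_{\alpha,n})=0$ holds with probability tending to one. First I would Taylor-expand this score around the true pseudo parameter $\theta^*_\A$:
\[
0=\frac{1}{n}\sum_{i=1}^n \pt H_\A(X_i,Y_i;\theta^*_\A)+\Big(\frac{1}{n}\sum_{i=1}^n \ptt H_\A(X_i,Y_i;\tilde\theta_n)\Big)(\hat\theta_{\alpha,n}-\theta^*_\A),
\]
where $\tilde\theta_n$ lies on the segment joining $\hat\theta_{\alpha,n}$ and $\theta^*_\A$, and hence $\tilde\theta_n\to\theta^*_\A$ almost surely.

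Next I would handle the two pieces separately. For the score term, since $\{(X_i,Y_i)\}$ are i.i.d.\ by {\bf A2}, I would apply the multivariate central limit theorem. The key point here is that $\theta^*_\A$ is, by its very definition as the minimizer of $\E[H_\A]$, a stationary point of the population criterion, so $\E[\pt H_\A(X,Y;\theta^*_\A)]=0$; this centering is what makes the CLT applicable and identifies the asymptotic mean as zero. Assumption {\bf A6} guarantees that the covariance of $\pt H_\A(X,Y;\theta^*_\A)$ is finite and equal to $K_\A$, so
\[
\frac{1}{\sqrt n}\sum_{i=1}^n \pt H_\A(X_i,Y_i;\theta^*_\A)\ \stackrel{d}{\longrightarrow}\ N(0,K_\A).
\]
For the Hessian term, I would invoke a uniform strong law of large numbers: by {\bf A7} the second derivative is dominated by an integrable envelope uniformly over the compact $\Theta$, so $\frac{1}{n}\sum_{i=1}^n \ptt H_\A(X_i,Y_i;\theta)\to \E[\ptt H_\A(X,Y;\theta)]$ uniformly, and continuity of the limit together with $\tilde\theta_n\to\theta^*_\A$ gives $\frac{1}{n}\sum_{i=1}^n \ptt H_\A(X_i,Y_i;\tilde\theta_n)\to J_\A$ almost surely. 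By {\bf A8}, $J_\A$ is positive definite, hence invertible, so the sample Hessian is invertible for large $n$.

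Rearranging the expansion gives $\sqrt n(\hat\theta_{\alpha,n}-\theta^*_\A)=-\big(\frac{1}{n}\sum \ptt H_\A(\tilde\theta_n)\big)^{-1}\frac{1}{\sqrt n}\sum \pt H_\A(\theta^*_\A)$, and Slutsky's theorem combined with the continuous mapping of matrix inversion yields the stated limit $N(0,J_\A^{-1}K_\A J_\A^{-1})$. The main obstacle I anticipate is not the structure of the argument, which is standard, but verifying the analytic regularity needed to differentiate $H_\A$ twice under the integral sign and to establish the integrable envelope in {\bf A7}: recall from Remark~1 that $\int f_\theta^{1+\A}(y|x)\,dy$ has no closed form for the [NT] and [NE] pseudo densities, so interchanging differentiation and this integral, and then bounding the resulting expressions uniformly in $\theta$ over the compact parameter box (using bounds of the type already derived after Remark~2 for the densities themselves), is the delicate technical step. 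I would therefore want to confirm that these dominated-convergence and differentiation-under-the-integral conditions are subsumed by {\bf A6}--{\bf A8} rather than requiring separate verification, treating those assumptions as the hypotheses that license the formal manipulations above.
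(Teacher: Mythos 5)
Your proposal is correct and follows essentially the same route as the paper: a first-order Taylor expansion of the score around $\theta^*_\A$, the multivariate CLT for $\frac{1}{\sqrt n}\sum_i \pt H_\A(X_i,Y_i;\theta^*_\A)$ using the centering $\E[\pt H_\A(X,Y;\theta^*_\A)]=0$ and assumption {\bf A6}, and convergence of the Hessian at the intermediate point to $J_\A$ under {\bf A7} (the paper packages this last step as Lemma~\ref{L.1}, via a localized uniform convergence argument in the style of Ling and McAleer, which is interchangeable with the global uniform SLLN plus continuity you invoke). Your closing observation is also consistent with the paper: the theorem takes {\bf A6}--{\bf A8} as hypotheses, and the delicate envelope verifications for [NT] and [NE] are deferred to Proposition~\ref{P1}.
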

\noindent {\bf Remark 3.}\label{R3} In the case of $f(y|x)=f_{\theta_0}(y|x)$ for some $\theta_0\in\Theta$, we have
\begin{eqnarray*}
J_\A&=&(1+\A)\E\big[ f_{\theta_0}^{\A-2}(Y|X)\pa_{\theta} f_{\theta_0} (Y|X) \pa_{\theta^T} f_{\theta_0} (Y|X)\big],\\
K_\A&=&(1+\A)^2\E\big[ f_{\theta_0}^{2\A-2}(Y|X)\pa_{\theta} f_{\theta_0} (Y|X) \pa_{\theta^T} f_{\theta_0} (Y|X)\big]-\E\big[\xi \xi^T\big],
\end{eqnarray*}
where $\xi=\int f_{\theta_0}^{\A}(y|X)\pa_{\theta} f_{\theta_0} (y|X)dy$.\\

\noindent For $\A>0$, assumptions {\bf A6} and {\bf A7} can be ensured by more simple conditions in the cases of [NT] and [NE]. Indeed, the following proposition provides a sufficient condition for  {\bf A6} and {\bf A7}.
\begin{prop}\label{P1}
 Assume that $\Theta$ is compact and $g(x,\beta)$ is twice differentiable w.r.t. $\beta$ for all $x$. Under the cases of [NT] and [NE], if $\E\big[\sup_{\theta \in \Theta} \|\pa_\beta g(X,\beta)\pa_{\beta^T} g(X,\beta)\|\big]<\infty$ and $\E\big[\sup_{\theta \in \Theta} \|\pa_{\beta \beta^T} g(X,\beta)\|\big]<\infty$, then {\bf A6} and {\bf A7} hold for $\A>0$.
\end{prop}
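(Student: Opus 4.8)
The plan is to reduce both {\bf A6} and {\bf A7} to the two assumed moment bounds on $g$ by exploiting the fact that, for [NT] and [NE], the pseudo density depends on $(x,y)$ only through the residual $t=y-g(x,\beta)$. Writing $f_\theta(y|x)=\psi_\eta(y-g(x,\beta))$, where $\eta$ collects the remaining parameters ($\eta=(\mu,\sigma_u,\sigma_v)$ for [NT] and $\eta=(\sigma_v,\sigma_u)$ for [NE]), every $\beta$-derivative produces by the chain rule a factor $\pa_\beta g(x,\beta)$, and every second $\beta$-derivative produces $\pa_\beta g\,\pa_{\beta^T}g$ together with $\pa_{\beta\beta^T}g$, each multiplied by a function of $t$ and $\eta$ alone. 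The whole argument is this factorization plus uniform control of the $t$-functions.

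First I would record the explicit derivatives. Since $\pt f_\theta=f_\theta U_\theta$ with $U_\theta=\pt\log f_\theta$, one has
\[\pt H_\A=(1+\A)\!\int f_\theta^{1+\A}(y|x)U_\theta(y|x)\,dy-(1+\A)f_\theta^{\A}(Y|X)U_\theta(Y|X),\]
and a further differentiation writes $\ptt H_\A$ as a sum of an integral term $\int\ptt f_\theta^{1+\A}(y|x)\,dy$ and a pointwise term proportional to $\ptt f_\theta^{\A}(Y|X)$. Using $\ptt f_\theta^{\A}=f_\theta^{\A}\big(\A^2 U_\theta U_\theta^{T}+\A\,\pt U_\theta^{T}\big)$, one sees that every pointwise contribution carries a prefactor $f_\theta^{\A}$.

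The heart of the matter — and the main obstacle — is to show that, for the [NT] and [NE] densities, all the $t$-functions that arise are bounded uniformly over the compact range of $\eta$. For the pointwise terms this is exactly where $\A>0$ is essential: $U_\theta$ and $\pt U_\theta$ grow at most polynomially in $t$ (the Gaussian/exponential logarithmic derivatives are polynomial, and the Mills-type ratios $\phi/\Phi$ contributed by the $\Phi$-factor grow at most linearly), while the prefactor $f_\theta^{\A}=\psi_\eta^{\A}(t)$ decays like $e^{-c\A t^{2}}$ for [NT] and at least like $e^{-c\A|t|}$ on the slower tail for [NE]; hence $\sup_t\psi_\eta^{\A}(t)\,|U_\theta(t)|^{k}<\infty$ uniformly in $\eta$, by continuity in $\eta$ together with the scale lower bound $\ds>0$ on the compact set. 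For the integral terms I would differentiate under the integral sign — justified by the same uniform domination — and observe that $\int\psi_\eta^{1+\A}(t)\,(\text{polynomial in }t)\,dt$ is finite and continuous in $\eta$, hence bounded on the compact $\eta$-set.

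With these uniform bounds in hand the factorization yields
\[\sup_{\tiT}\|\ptt H_\A(X,Y;\theta)\|\le C_1+C_2\sup_{\tiT}\|\pa_\beta g\,\pa_{\beta^T}g\|+C_3\sup_{\tiT}\|\pa_{\beta\beta^T}g\|,\]
so taking expectations and invoking the two assumed moment conditions gives {\bf A7}. Likewise $\pt H_\A$ is bounded by $C_4+C_5\|\pa_\beta g\|$, so $\pt H_\A\,\pa_{\theta^T}H_\A$ is bounded by a constant plus a multiple of $\|\pa_\beta g\,\pa_{\beta^T}g\|$; since $\E\|\pa_\beta g\|\le(\E\|\pa_\beta g\,\pa_{\beta^T}g\|)^{1/2}<\infty$, evaluating at $\theta^*_\A$ and taking expectations gives {\bf A6}. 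Everything except the model-specific tail estimate for the $t$-functions of [NT] and [NE] is routine bookkeeping with the chain rule, so that estimate is where I expect the real work to lie.
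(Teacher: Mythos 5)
Your proposal is correct and follows essentially the same route as the paper: a chain-rule factorization that isolates $\pa_\beta g$ and $\pa^2_{\beta\beta^T}g$, uniform control of the remaining residual-dependent factors by playing the polynomial growth of the log-derivatives (including the Mills ratios $\phi/\Phi$, which grow linearly) against the exponential decay of $f_\theta^{\A}$ on the compact parameter set, and the Cauchy--Schwarz reduction of $\E\|\pa_\beta g\|$ to the assumed second-moment condition. The ``model-specific tail estimate'' you flag as the real work is precisely the paper's Lemma 6.2 (the bound $\sup_x\Phi^k(x)|x|^l e^{-mx}\phi^n(x)/\Phi^n(x)\lesssim 1$ for $k>0$), applied to the explicit derivative expressions for [NT] and [NE] in its Lemma 6.3.
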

\noindent {\bf Remark 4.} In the case where $g(x,\beta)$ is a linear function of $x$, i.e., $g(x)=\beta^T x$, one can see that  $\E\big[\sup_{\theta \in \Theta} \|\pa_\beta g(X,\beta)\pa_{\beta^T} g(X,\beta)\|\big]=\E \|XX^T\|$ and $\E\sup_{\theta \in \Theta} \|\pa^2_{\beta \beta^T} g(X,\beta)\|=0$. Hence, the conditions in the proposition reduce to $\E \|XX^T\|<\infty$. This condition is not a serious restriction in empirical analysis, because it is usual to regard the input variables as limited resources which implies that the input vector $X$ can be assumed to be finite. In other cases, $\E \|XX^T\|<\infty$  together with the compactness of $\Theta$ and the continuity of $\pa_\beta g$ and $\pa^2_{\beta \beta^T} g$ can be a sufficient condition for {\bf A6} and {\bf A7}.\\

\noindent Proofs for the results in this subsection are provided in Appendix.

\subsection{The influence function of the MDPD estimator}\label{subsec:influence-ftn}
In this subsection, we discuss the influence function of the MDPD estimator to describe the effect of infinitesimal contamination. Letting $F$ be
the true distribution of $(X,Y)$, the functional $T(F)$ corresponding to the MDPD estimator
can be defined as
\begin{eqnarray*}
T(F):= \argmin{\theta \in \Theta}
\int_{\mathbb{R}^{p+1}} H_{\alpha}(x,y; \T)dF.
\end{eqnarray*}
Note that since
\begin{eqnarray}\label{IF1}
\int_{\mathbb{R}^{p+1}} H_{\alpha}(x,y; \T)dF=\left\{
\begin{array}{ll}\displaystyle\E\big[ d_\alpha (f(\cdot|X), f_{\theta}(\cdot|X))\big]-\frac{1}{\alpha}\E\big[f^\A(Y|X)\big] & \mbox{, $\alpha > 0$}\\\vspace{-0.3cm} \\
     \displaystyle\E\big[ d_\alpha (f(\cdot|X), f_{\theta}(\cdot|X))\big] - \E\big[\log f(Y|X)\big]     & \mbox{, $\alpha = 0$}
 \end{array}
 \right.
\end{eqnarray}
and $d_\alpha (f(\cdot|X), f_{\theta}(\cdot|X))$ has a minimum value at $\theta^*_\A$ almost surely, $T(F)$ becomes $\theta^*_\A$.
For $\epsilon \in [0,1]$, denote by $F_\epsilon$ the contaminated
distribution of the form:
\[F_\epsilon=(1-\epsilon)F +\epsilon \delta(x_0,y_0),\]
where $\delta(x_0,y_0)$
has all its mass at the point $(x_0,y_0)$. Then, the functional
$T(F_\epsilon)$ satisfies the following equation:
\begin{eqnarray*}
(1-\epsilon)\int_{\mathbb{R}^{p+1}}  \pa_\theta H_{\alpha}\big(x,y;T(F_\epsilon)\big)dF+\epsilon\pa_\theta H_{\alpha}\big(x_0,y_0;T(F_\epsilon)\big) ={\bf0}.
\end{eqnarray*}
Hence, taking the derivative of the LHS of the above equation w.r.t.
$\epsilon$ and putting $\epsilon=0$, the influence function of $T$ at $F$ is obtained as
\begin{eqnarray*}
IF_\alpha(x_0,y_0;T,F)&=&-\Big\{\int_{\mathbb{R}^{p+1}}\pa^2_{\T\T^T}H_\alpha \big(x,y;T(F)\big)dF\Big\}^{-1}
\,\pa_\T H_\alpha \big(x_0,y_0;T(F)\big)\\
 &=&-\left\{ \E\left(\pa^2_{\T \T^T} H_{\alpha}(X,Y;\theta^*_\A)\right)\right\}^{-1}
\,\pa_\T H_\alpha \big(x_0,y_0;\theta^*_\A\big)\,.
\end{eqnarray*}
Using  (\ref{l3.3a}), (\ref{l3.3b}), (\ref{P1.0}) and Lemma \ref{L.3} in Appendix, we have the following result.
\begin{prop}\label{P2} Assume that $\Theta$ is compact and $g(x,\beta)$ is differentiable w.r.t. $\beta$ for all $x$. Under the cases of [NT] and [NE], we have that for $\A>0$ and $\theta \in \Theta$,
\begin{eqnarray*}
\big\|\pa_\T H_\alpha \big(x,y;\theta\big)\big\| &\leq& C \big(1+ \|\pa_\beta g(x,\beta)\|\big),
\end{eqnarray*}
where $C$ is a constant free from $x_0,y_0$ and $\beta$.
\end{prop}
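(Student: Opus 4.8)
The plan is to start from the explicit gradient of $H_\A$ and to split it into the block differentiated with respect to the frontier parameter $\beta$, which will carry a factor of $\pa_\beta g(x,\beta)$, and the block differentiated with respect to the nuisance parameters, which will not. Writing $U_\theta(y|x)=\pa_\T\log f_\theta(y|x)$ and using $\pa_\T f_\theta=f_\theta U_\theta$, for $\A>0$ we have
\[
\pa_\T H_\A(x,y;\theta)=(1+\A)\Big[\int f_\theta^{1+\A}(z|x)\,U_\theta(z|x)\,dz-f_\theta^{\A}(y|x)\,U_\theta(y|x)\Big],
\]
so by the triangle inequality it suffices to bound the pointwise term $f_\theta^{\A}(y|x)\|U_\theta(y|x)\|$ and the integral term $\int f_\theta^{1+\A}(z|x)\|U_\theta(z|x)\|\,dz$, uniformly over $\theta\in\Theta$, by a quantity of the form $C(1+\|\pa_\beta g(x,\beta)\|)$.

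The key structural observation is that in both [NT] and [NE] the pseudo density depends on $(x,y,\beta)$ only through the residual $e=y-g(x,\beta)$; write $f_\theta(y|x)=\psi_\theta(e)$ for a function $\psi_\theta$ of $e$ and the remaining parameters. Consequently the $\beta$-component of the score factors as $\pa_\beta\log f_\theta(y|x)=-(\pa_e\log\psi_\theta)(e)\,\pa_\beta g(x,\beta)$, so that, with the $l_1$ norm, its contribution is $|(\pa_e\log\psi_\theta)(e)|\,\|\pa_\beta g(x,\beta)\|$; the components for the remaining location/scale parameters ($\mu,\sigma_u,\sigma_v$ under [NT], $\sigma_v,\sigma_u$ under [NE]) are functions of $e$ and the parameters alone and are free of $\pa_\beta g$. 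Since $\pa_\beta g(x,\beta)$ does not depend on the integration variable $z$, it also factors out of the integral term after the change of variables $e=z-g(x,\beta)$, leaving an integral that depends only on the parameters.

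With this decomposition in hand, the proposition reduces to the uniform bounds
\[
\sup_{\theta\in\Theta}\sup_{e}\ \psi_\theta^{\A}(e)\,\big|\pa_{\theta_j}\log\psi_\theta(e)\big|<\infty,\qquad \sup_{\theta\in\Theta}\int \psi_\theta^{1+\A}(e)\,\big|\pa_{\theta_j}\log\psi_\theta(e)\big|\,de<\infty,
\]
taken over every coordinate $\theta_j$ and over the $e$-derivative $\pa_e\log\psi_\theta$; these are exactly the estimates recorded in (\ref{l3.3a}), (\ref{l3.3b}), (\ref{P1.0}) and Lemma \ref{L.3}. Assembling them, the non-$\beta$ coordinates contribute a constant while the $\beta$-coordinate contributes that constant times $\|\pa_\beta g(x,\beta)\|$, which yields $\|\pa_\T H_\A(x,y;\theta)\|\le C(1+\|\pa_\beta g(x,\beta)\|)$ with $C$ free of $x,y$ and $\beta$.

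The crux of the argument—and the step already carried out in the appendix lemmas—is verifying the displayed uniform bounds, and this is where $\A>0$ is essential. The point is that $\psi_\theta^\A(e)$ inherits Gaussian-type (under [NT]) or mixed exponential/Gaussian (under [NE]) tail decay in $e$, which dominates the at-most-linear growth of the log-score in $e$, including the inverse-Mills-ratio terms $\phi/\Phi$ coming from the $\Phi$ factor. Compactness of $\Theta$, which keeps $\sigma_u,\sigma_v$ bounded away from $0$ and $\infty$ and $\mu$ in a bounded interval, is what upgrades the pointwise-in-$\theta$ estimates to the uniform bounds required here. Once the residual decomposition is set up, no estimate beyond invoking the cited appendix results is needed, so the main obstacle is entirely contained in those tail/uniformity computations rather than in the assembly above.
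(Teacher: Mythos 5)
Your proposal is correct and follows essentially the same route as the paper: it rewrites $\pa_\T H_\A$ as the difference of the integral term and the pointwise score term, bounds the integral term by the supremum of $f_\theta^{\A}\|U_\theta\|$ exactly as in (\ref{P1.0}), and then invokes the factorization of the $\beta$-score through $\pa_\beta g(x,\beta)$ together with the uniform bounds (\ref{l3.3a}), (\ref{l3.3b}) and Lemma \ref{L.3}, which is precisely the assembly the paper intends. The only cosmetic imprecision is calling the growth of the log-score in the residual ``at-most-linear'' (the scale derivatives and the term $\Delta_2\,\phi(\Delta_2)/\Phi(\Delta_2)$ grow quadratically), but this is immaterial since the Gaussian/exponential decay of $\psi_\theta^{\A}$ dominates any polynomial.
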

The proposition states that the influence function of the MDPD estimator with $\A>0$ using (\ref{NT}) or (\ref{NE}) is bounded in $y_0$ regardless of the form of $g(x,\beta)$ and the boundness in $x_0$ is determined by the boundness of $\pa_\beta g(x,\beta^*_\A)$. Hence, examining $\pa_\beta g(x,\beta)$, one can see whether the influence function of the estimator is bounded or not. For instance, if the input vector $X$ is assumed to be finite as mentioned in Remark 4, the continuity of $\pa_\beta g(x,\beta^*_\A)$ yields
\begin{eqnarray*}
\sup_{(x,y)\in\mathbb{R}^{p+1}}\|IF_\alpha(x,y;T,F)\| <\infty,
\end{eqnarray*}
which means that the MDPD estimator with $\A>0$ has a finite gross error sensitivity. The case of $g(x,\beta)=\beta^Tx$ satisfies the condition.

 On the other hand, the influence function of the QML estimator is unbounded in $y_0$. To see this, note that $\pa_\theta H_0(x,y;\theta)=-\pa_\theta f_\theta(y|x) /f_\theta(y|x)$. Using the notations in (\ref{deriv:NT}) and (\ref{deriv:NE}), we have that under the case of [NT],
\[\big\|\pa_\T H_0 \big(x, y;\theta\big)\big\|=|D_{1,\beta}|\ \|\pa_\beta g(x,\beta)\| +|D_{1,\mu}|+|D_{1,v}|+|D_{1,u}|, \]
and under the case of [NE],
\[\big\|\pa_\T H_0 \big(x, y;\theta\big)\big\|=|D_{2,\beta}|\ \|\pa_\beta g(x,\beta)\| +|D_{2,v}|+|D_{2,u}|. \]
One can readily check that each of the above two equations contains unbounded terms. For instance, $D_{1,\mu}$ and $D_{2,v}$ include $\Delta_1$ and $\xi \frac{\phi(\xi)}{\Phi(\xi)}$, respectively, which are obviously unbounded in $y_0$. Thus, we have
\begin{eqnarray*}
\sup_{(x,y)\in\mathbb{R}^{p+1}}\|IF_0(x,y;T,F)\| =\infty.
\end{eqnarray*}
Therefore, we can conclude that the MDPD estimator with $\A>0$ has a robust property while the QML estimator does not.

\subsection{The choice of optimal $\A$}\label{subsec:choice-alpha}
Choosing an optimal $\alpha$ is an important issue in empirical studies. Taking a rather conservative approach, a small $\alpha$ is recommended because too a large $\alpha$ may result in a significant loss in efficiency when the portion of outliers is not very large, as speculated. Several studies on the problem are found in the literature. Warwick and Jones (2005) proposed a selection rule for $\alpha$ that minimizes the asymptotic estimation of the mean squared error. Fujisawa and Eguchi (2006) proposed an adaptive method based on an empirical approximation of the Cramer-von Mises divergence. Durio and Isaia (2011) considered a data-driven method based on the similarity measure between the MDPD estimate and the ML estimate.

In our real data analysis, we employ the procedure of Durio and Isaia (2011) to select an optimal $\A$. More specifically, suppose that a sample $\{(y_i,x_i)\}_{i=1}^n$ is observed from a regression model $Y=m_\beta(X)+\epsilon$, where $X=(X_{1},\cdots,X_{p})$ and the variance of $\epsilon$ is $\sigma^2$. Then, let $T_0$ and $T_1$ be two regression estimators for $\beta$. Now, we wish to choose one of the two estimators. To do so, Durio and Isaia (2011) proposed the following normalized index to measure the similarity between two estimates, say $\hat{\beta}_{T_0}$ and $\hat{\beta}_{T_1}$. Letting
\begin{eqnarray*}
I^p&=&[\min x_{i1} , \max x_{i1}]\times\cdots\times [\min x_{ip},\max x_{ip}],\\
C&=&I^p \times [\min y_i, \max y_i],\\
D&=&\{(x,y) : \min(m_{\hat{\beta}_{T_0}}(x), m_{\hat{\beta}_{T_1}}(x)) \leq y \leq \max(m_{\hat{\beta}_{T_0}}(x), m_{\hat{\beta}_{T_1}}(x)), x\in I^p\} \cap C,
\end{eqnarray*}
the similarity index is defined by
\[ sim(T_0,T_1):=\frac{\int_{D}\ dt}{\int_{C}\ dt}.\]
If two estimates $\hat{\beta}_{T_0}$ and $\hat{\beta}_{T_1}$ are close, then $sim(T_0, T_1)$ will be close to zero. In order to investigate whether $\hat{\beta}_{T_0}$ and $\hat{\beta}_{T_1}$ are close, they used the simplified Monte Carlo significance (MCS) test based on the above statistics. That is, after generating $m-1$ bootstrap samples of size $n$, $sim^*(T_0,T_1)$ is calculated for each bootstrap sample to obtain a critical value. Here, bootstrap sample $\{(Y^*_i, x_i)\}_{i=1}^n$ is sampled from $Y^*_i=m_{\hat{\beta}_{T_0}}(x_i)+\tilde\epsilon_i$, where $\tilde\epsilon_i$ is generated from a specified distribution with mean zero and variance $\hat\sigma^2_{T_0}$.
If $sim(T_0, T_1)$ is less than the maximum value of $sim^*(T_0,T_1)$, we accept the null hypothesis ($H_0$) of $\beta=\hat{\beta}_{T_0}$ at a significance level of $1/m$, and conclude that $\hat{\beta}_{T_0}$ and $\hat{\beta}_{T_1}$ are close. This test can be used to check for outliers.
For example, if $T_0$ is the ML estimator and $T_1$ is a robust estimator, accepting $H_0$  means that no outlier is detected and, therefore, we select the ML estimate owing to its efficiency. Based on this, the procedure for selecting $\A$ is as follows:
\begin{enumerate}[itemsep=-1mm]
\item In order to check for the existence of outliers, conduct the simplified MCS test with the ML estimator ($T_0$) and the MDPD estimator with $\A=\A^*$ ($T_1$), for some $0<\A^*\leq1$.
\item If the MCS test leads us to accept $H_0$, then we decide that outliers are absent and, thus, the ML estimate is selected.
\item If not, we again perform the MCS test with the MDPD estimators with $\A=a$ ($T_0$) and  $\A=\A^*$ ($T_1$), increasing $a$ until the first time we can accept $H_0$.
\end{enumerate}

\section{Simulation study}\label{sec:simulation-study}
In this section, we evaluate the finite-sample performance of the MDPD estimator with $\A>0$ and compare it with the ML estimator. For this task, we consider the following model:
\begin{eqnarray}\label{S1}
Y=\beta_0+\beta_1X+V-U,
\end{eqnarray}
where $X\sim U(0,1)$, $V\sim N(0,\sigma_v^2)$ and $U\sim N^+(0,\sigma_u^2)$. The true parameter vector $(\beta_0, \beta_1, \sigma_v^2, \sigma_u^2)$ is considered to be $(5, 5, 0.75, 1)$. We generate 1,000 samples of size $n=500$ and, for each sample, the ML estimates and the MDPD estimates with $\A\in\{0.05,0.1,0.2,0.3,0.5,0.75,1\}$ are obtained.
Based on $1,000$ repetitions, the mean, standard deviation (SD), and the sample mean squared error (MSE) of each estimate are calculated.
In order to assess the performance, the following figure is considered:
\[d:=\sqrt{\Big(\frac{\hat\beta_0-\beta_0}{\beta_0}\Big)^2+\Big(\frac{\hat\beta_1-\beta_1}{\beta_1}\Big)^2+
\Big(\frac{\hat\sigma_v^2-\sigma_v^2}{\sigma_v^2}\Big)^2+\Big(\frac{\hat\sigma_u^2-\sigma_u^2}{\sigma_u^2}\Big)^2}.\]
We also estimate the individual TE using the estimator proposed by Battese and Coelli (1988), which is based on the ML estimate and the MDPD estimates. Then, we calculate the MSE of the estimated TEs. That is,
\[MSE_{[TE]}:=\frac{1}{n}\sum_{i=1}^n\big(\hat{TE}_i-TE_i \big)^2,\]
where $TE_i$ is the true TE given by $e^{-U_i}$ and $\hat{TE}_i$ is obtained by
\begin{eqnarray}\label{S2}
\hat{TE}_i:=\frac{\Phi\big( {\mu_*} /\sigma_*-\sigma_*\big)}{\Phi({\mu_*} /\sigma_*)} exp\Big\{ -{\mu_*}+\frac{1}{2} \sigma_*^2\Big\},
\end{eqnarray}
where $\mu_*= -(Y_i-\hat{\beta}_0-\hat{\beta}_1 X_i) \hat\sigma_u^2/(\hat\sigma_v^2+\hat\sigma_u^2)$ and
$\sigma_*=\hat\sigma_v^2\hat\sigma_u^2/(\hat\sigma_v^2+\hat\sigma_u^2)$. Now, we compare the performance based on the means of $d$ and the $MSE_{[TE]}$.

\begin{table}[ht]
{\scriptsize
\caption{\small Mean (SD/MSE) of the estimates, mean of $d$, and $MSE_{[TE]}$ when no outliers exist.}\label{tab1}
\tabcolsep=2.5pt
\begin{tabular}{p{0.3cm}|p{0.55cm}|llllll|ll}
\hline
\multicolumn{2}{c|}{}&\multicolumn{1}{l}{$\beta_0$}&\multicolumn{1}{l}{$\beta_1$}&\multicolumn{1}{l}{$\sigma^2$}&\multicolumn{1}{l}{$\gamma$}&\multicolumn{1}{l}{$\sigma_v^2$}&\multicolumn{1}{l}{$\sigma_u^2$}&\multicolumn{1}{|l}{$d$}&\multicolumn{1}{l}{$MSE_{[TE]}$}\\
\hline
\multicolumn{2}{c|}{MLE}	&4.950	&5.001	&1.730	&1.122	&0.761	&0.969	&0.404	&0.061\\
\multicolumn{2}{c|}{}	&(0.257/0.069)	&(0.171/0.029)	&(0.311/0.097)	&(0.428/0.184)	&(0.158/0.025)	&(0.445/0.199)	&[1.000]	&\\
\hline								
&0.05	  &4.956	        &5.001	         &1.733	              &1.130	        &0.760	          &0.973	        &0.399	     &0.058$^*$\\
&	      &(0.245/0.062)$^*$&(0.171/0.029)$^*$&(0.307/0.094)      &(0.412/0.170)$^*$&(0.156/0.024)$^*$&(0.438/0.193)	&[0.986]	 &\\
M&0.10	  &4.954	        &5.000	         &1.732	              &1.127	        &0.762         	  &0.970	        &0.397	     &0.059\\
&	      &(0.248/0.064)	&(0.171/0.029)	 &(0.307/0.094)$^*$	  &(0.413/0.171)	&(0.156/0.025)	  &(0.438/0.192)$^*$&[0.983]$^*$ &\\
D&0.20	  &4.952	        &5.001           &1.733	              &1.127	        &0.762	          &0.971        	&0.415	     &0.060\\
&	      &(0.253/0.066)	&(0.174/0.030)   &(0.317/0.101)       &(0.427/0.183)	&(0.161/0.026)	  &(0.453/0.206)	&[1.026]	 &\\
P&0.30	  &4.952	        &5.000	         &1.738	              &1.132	        &0.760	          &0.978	        &0.430	     &0.061\\
&	      &(0.263/0.071)	&(0.176/0.031)	 &(0.330/0.109)	      &(0.445/0.198)	&(0.166/0.028)	  &(0.469/0.220)	&[1.065]	 &\\
D&0.50	  &4.945	        &4.999         	 &1.743	              &1.137	        &0.758	          &0.985	        &0.481	     &0.066\\
&	      &(0.287/0.085)	&(0.184/0.034)	 &(0.370/0.137)	      &(0.503/0.253)	&(0.183/0.034)	  &(0.524/0.274)	&[1.191]	 &\\
E&0.75	  &4.927	        &4.998	         &1.741	              &1.131	        &0.759	          &0.983	        &0.542   	 &0.074\\
&	      &(0.323/0.109)	&(0.194/0.038)	 &(0.412/0.170)	      &(0.580/0.337)	&(0.204/0.042)	  &(0.584/0.341)	&[1.341]	 &\\
&1.00	  &4.916	        &4.999      	 &1.750	              &1.146	        &0.755	          &0.995	        &0.606	     &0.080\\
&	      &(0.344/0.125)	&(0.207/0.043)	 &(0.453/0.205)	      &(0.658/0.433)	&(0.224/0.050)	  &(0.646/0.416)	&[1.499]	 &\\
 \hline
\multicolumn{10}{l}{\footnotesize Notes: The values in square brackets show the ratios of the mean of $d$ to that of the ML estimate}
  \end{tabular}}
\end{table}

First, we deal with the case where the observations are not contaminated by outliers. The estimation results are reported in Table~\ref{tab1}, where the figures marked by the symbol $*$ denote the minimal MSE, $d$, and $MSE_{[TE]}$. It can be seen that the MDPD estimators with $\A=0.05$ and $0.1$ slightly outperform the ML estimator, and the MDPD estimator with $\A=0.2$ performs similarly to the ML estimator. This is interesting because we had anticipated that the ML estimator would perform best. Nonetheless, we {\it could} expect that the ML estimator would show the best performance as the sample size increases. The point is that the performance of the MDPD estimator with $\alpha$ close to 0 is similar to the ML estimator, and the efficiency of the MDPD estimator decreases with an increase in $\A$. The results in Table~\ref{tab1} confirm this finding.

Next, we examine the case in which outliers are involved in the observations. For this, we generate two types of contaminated samples. The first considers upward outliers and is generated as follows: i) generate the uncontaminated sample $\{(X_i ,Y_i)\}_{i=1}^n$ from the model~(\ref{S1}), and outliers $\{(X_i^o ,Y_i^o)\}_{i=1}^{n^o}$ by
$Y_i^o =\beta_0+\beta_1X_i^o+p_v \sigma_v,$
where $X_i^o \sim i.i.d.\ U(0,1)$; ii) replace $n^o$ observations in $\{(X_i ,Y_i)\}_{i=1}^n$ by $\{(X_i^o ,Y_i^o)\}_{i=1}^{n_o}$ .
In the second type of contamination, $n^o$ observations in the uncontaminated sample are replaced by $\{(X_i^o ,Y_i^o)\}_{i=1}^{n_o}$, where $X_i^o \sim i.i.d.\ U(0,1)$ and $Y_i^o \sim i.i.d.\ U(0.5,1)$, to create downward outliers. Hence, the first sample describes a situation in which some companies or individuals achieve a relatively high efficiency, whereas the second considers low efficiency cases. For the simulation, $n^o=3$ and $p_v=5$ are considered.

\begin{table}[ht]
{\scriptsize
\caption{\small Mean (SD/MSE) of the estimates, mean of $d$ and $MSE_{[TE]}$ when upward outliers exist: $n^o=3, p_v=5$.}\label{tab2}
\tabcolsep=2.5pt
\begin{tabular}{p{0.3cm}|p{0.55cm}|llllll|ll}
\hline
\multicolumn{2}{c|}{}&\multicolumn{1}{l}{$\beta_0$}&\multicolumn{1}{l}{$\beta_1$}&\multicolumn{1}{l}{$\sigma^2$}&\multicolumn{1}{l}{$\gamma$}&\multicolumn{1}{l}{$\sigma_v^2$}&\multicolumn{1}{l}{$\sigma_u^2$}&\multicolumn{1}{|l}{$d$}&\multicolumn{1}{l}{$MSE_{[TE]}$}\\
\hline
\multicolumn{2}{c|}{MLE}	&4.243	&5.004	&1.267	&0.012	&1.258	&0.009	&1.213	&0.285\\
\multicolumn{2}{c|}{}	&(0.125/0.589)	&(0.177/0.031)	&(0.091/0.242)	&(0.09/1.313)	&(0.069/0.262)	&(0.08/0.988)	&[1.000]	&\\
\hline								
&0.05	  &4.382	        &5.008	          &1.297	        &0.196            &1.160	        &0.137       	  &1.044     	&0.219\\
&	      &(0.269/0.454)	&(0.173/0.030)    &(0.210/0.249)	&(0.324/1.023)	  &(0.126/0.184)	&(0.279/0.823)	  &[0.860]	    &\\
M&0.10	  &4.548	        &5.009	          &1.384	        &0.436        	  &1.044	        &0.340	          &0.814	    &0.165\\
&	      &(0.348/0.326)	&(0.175/0.031)	  &(0.279/0.212)	&(0.458/0.727)    &(0.164/0.113)	&(0.409/0.602)	  &[0.671]	    &\\
D&0.20	  &4.830	        &5.012	          &1.608	        &0.898	          &0.853      	    &0.755	          &0.528	    &0.088\\
&	      &(0.330/0.138)   	&(0.173/0.030)$^*$&(0.340/0.136)	&(0.513/0.329)	  &(0.190/0.047)	&(0.506/0.316)	  &[0.435]	    &\\
P&0.30    &4.915	        &5.009	          &1.698	        &1.065	          &0.786	        &0.912	          &0.468	    &0.069\\
&         &(0.292/0.093)	&(0.176/0.031)	  &(0.344/0.121)$^*$&(0.487/0.245)$^*$&(0.180/0.034)$^*$&(0.500/0.257)$^*$&[0.386]$^*$ &\\
D&0.50	  &4.940          	&5.011	          &1.733	        &1.124	          &0.763	        &0.970	          &0.487	    &0.066$^*$\\
&	      &(0.290/0.088)$^*$&(0.185/0.034)	  &(0.368/0.136)	&(0.504/0.255)	  &(0.184/0.034)	&(0.525/0.277)	  &[0.401]	    &\\
E&0.75    &4.923	        &5.011	          &1.738	        &1.127	          &0.762        	&0.976	          &0.553  	    &0.074\\
&	      &(0.323/0.110)   	&(0.194/0.038)	  &(0.414/0.172)	&(0.583/0.341)	  &(0.205/0.042)	&(0.592/0.350)	  &[0.456]	    &\\
&1.00	  &4.915	        &5.010	          &1.750	        &1.143	          &0.758	        &0.992	          &0.611	    &0.078\\
&	      &(0.344/0.126)	&(0.205/0.042)	  &(0.454/0.206)	&(0.652/0.425)	  &(0.223/0.050)	&(0.648/0.419)	  &[0.504]	    &\\

 \hline
  \end{tabular}}
\end{table}

\begin{table}[ht]
{\scriptsize
\caption{\small Mean (SD/MSE) of the estimates, mean of $d$ and $MSE_{[TE]}$ when downward outliers exist: $n^o=3$.}\label{tab3}
\tabcolsep=2.5pt
\begin{tabular}{p{0.3cm}|p{0.55cm}|llllll|ll}
\hline
\multicolumn{2}{c|}{}&\multicolumn{1}{l}{$\beta_0$}&\multicolumn{1}{l}{$\beta_1$}&\multicolumn{1}{l}{$\sigma^2$}&\multicolumn{1}{l}{$\gamma$}&\multicolumn{1}{l}{$\sigma_v^2$}&\multicolumn{1}{l}{$\sigma_u^2$}&\multicolumn{1}{|l}{$d$}&\multicolumn{1}{l}{$MSE_{[TE]}$}\\
\hline
\multicolumn{2}{c|}{MLE}	&5.303            &4.983            &2.617            &1.955            &0.554            &2.063            &1.102      &0.056 \\
\multicolumn{2}{c|}{}	   &(0.133/0.110)    &(0.177/0.032)    &(0.260/0.819)    &(0.301/0.732)    &(0.095/0.048)    &(0.317/1.230)    &[1.000]    &\\
\hline								
&0.05	  &5.223            &4.996            &2.354            &1.739            &0.596            &1.758            &0.796      &0.052\\
&	      &(0.131/0.067)    &(0.173/0.030)    &(0.238/0.421)    &(0.279/0.419)    &(0.099/0.034)    &(0.300/0.664)    &[0.723]    &\\
M&0.10	  &5.151            &5.003            &2.149            &1.558            &0.638            &1.511            &0.571      &0.049$^*$\\
&	      &(0.143/0.043)$^*$&(0.172/0.029)$^*$&(0.259/0.226)    &(0.295/0.249)    &(0.110/0.025)    &(0.333/0.372)    &[0.518]    &\\
D&0.20	  &5.047            &5.005            &1.916            &1.328            &0.700            &1.216            &0.423      &0.052 \\
&	      &(0.204/0.044)    &(0.172/0.030)    &(0.312/0.125)    &(0.377/0.172)$^*$&(0.137/0.021)$^*$&(0.417/0.220)    &[0.384]    &\\
P&0.30	  &4.998            &5.003            &1.823            &1.230            &0.725            &1.098            &0.419      &0.057\\
&	      &(0.238/0.057)    &(0.174/0.030)    &(0.341/0.121)$^*$&(0.424/0.185)    &(0.148/0.023)    &(0.454/0.216)$^*$&[0.380]$^*$&\\
D&0.50	  &4.953            &5.000            &1.752            &1.157            &0.735            &1.017            &0.465      &0.067\\
&	      &(0.286/0.084)    &(0.182/0.033)    &(0.405/0.164)    &(0.497/0.247)    &(0.171/0.029)    &(0.513/0.264)    &[0.422]    &\\
E&0.75	  &4.922            &4.997            &1.695            &1.132            &0.701            &0.994            &0.546      &0.077\\
&	      &(0.332/0.116)    &(0.201/0.040)    &(0.539/0.294)    &(0.582/0.339)    &(0.218/0.050)    &(0.584/0.341)    &[0.496]    &\\
&1.00	  &4.904            &4.983            &1.643            &1.129            &0.648            &0.995            &0.649      &0.088\\
&	      &(0.374/0.149)    &(0.272/0.074)    &(0.687/0.482)    &(0.672/0.452)    &(0.294/0.096)    &(0.657/0.432)    &[0.590]    &\\
 \hline
  \end{tabular}}
\end{table}

\begin{figure*}
\includegraphics[height=0.6\textwidth,width=1\textwidth]{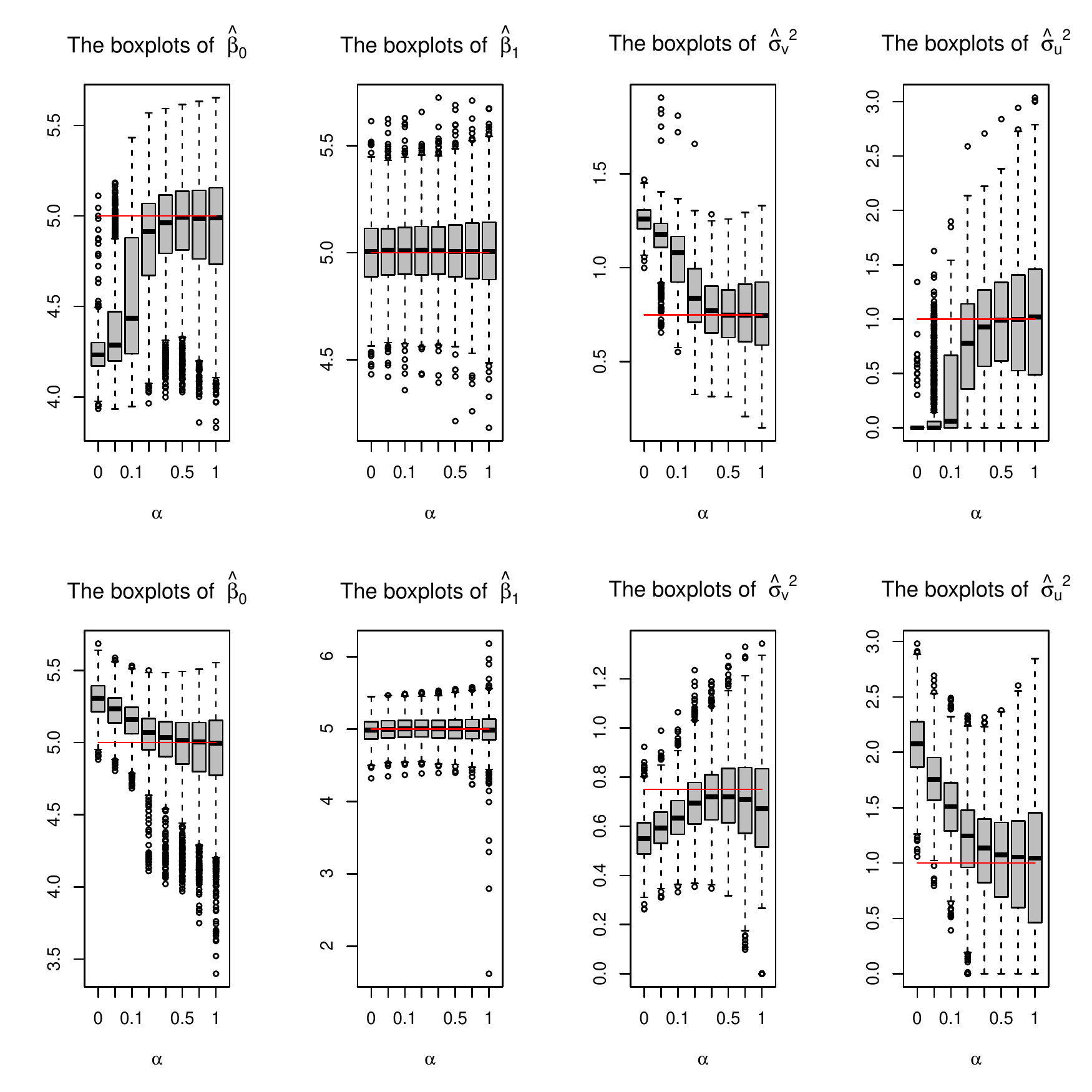}
\caption{The box plots for the upward (upper panel) and the downward (lower panel) contamination cases.}\label{fig1}
\end{figure*}

Tables~\ref{tab2} and \ref{tab3} present the estimation results for the upward and downward contamination cases, respectively. The box plots of the ML and MDPD estimates are displayed in Figure~\ref{fig1}. Here, the upper and lower panels show the upward and downward outlier cases, respectively. In each box plot, the horizontal red line represents the true parameter values. We first note that all the MDPD estimators under consideration produce a smaller mean for $d$ than that produced by the ML estimator. In particular, the estimator with $\A$ between 0.2 and 0.5 yields quite a small mean of $d$ relative to the mean of the ML estimator. This indicates that the MDPD estimator performs better than the ML estimator does. As shown in Table~\ref{tab2} and the upper panel of Figure~\ref{fig1}, the ML estimator yields severe underestimates of $\beta_0$ and $\sigma_u^2$ and overestimate of $\sigma_v^2$, whereas the MDPD estimator with $\A>0.2$ estimate the parameters properly.  Here, it is important to note that the underestimation of $\sigma_u^2$ leads to an overestimate of the TE values. On the other hand, the case of the downward outlier contamination shows different results. As can be seen in Table~\ref{tab3} and the lower panel of Figure~\ref{fig1}, $\sigma_u^2$ and $\beta_0$ are overestimated and $\sigma_v^2$ is underestimated by the ML estimator. In both contamination cases, $\beta_1$ does not seem to be affected by the outliers. Although not shown here, as more data are contaminated by outliers (i.e., as $n^o$ or $p_v$ increases), the MDPD estimator performs increasingly better than the ML estimator does. From these simulation results, we confirm that the MDPD estimator possesses much more robust properties than the ML estimator does.

\section{Real data analysis}\label{sec:real-data-analysis}
This section provides the empirical data analysis, consisting of two subsections. The first subsection describes the data set used in the empirical study. The second subsection provides the QML and MDPD estimation results, including the procedures for checking outliers and selecting an optimal $\A$. Based on the results, we then calculate and compare the estimated TEs.

\subsection{Data}\label{subsec:data}
We investigate the distribution of TE scores for Korean manufacturing firms. To do so, we use firm-level financial statement data taken from the Korea Information Services (KIS-VALUE) in 2007. To measure the TE scores, we collect data on value-added ($Y$, output), capital stock ($K$, input), and labor ($L$, input). Fixed assets are used as a proxy for capital stock, comprising the sum of five components such as land, building, construction, vehicles, and machine tools. The number of employees is used for the labor variable. Observations with negative $Y$
have been removed from the original data. Then, the number of firms in our final data set is 2,031.

\begin{table}[ht]
\centering
\caption{\small Descriptive statistics of variables used in the empirical study ($n = 2,031$)}\label{tab:description}
\begin{tabular}{lrrrrr}
  \toprule
   & Mean & Median & S.D. & Max & Min \\
  \midrule
  Y (Value-added, Thous. KRW) & 19,290.4 & 6,303.2 & 75,198.1 & 1,756,980.8 & 45.0 \\
  K (Capital stock, Thous. KRW)& 48,202.8 & 11,686.4 & 222,752.3 & 3,944,656.7 & 28.1 \\
  L (Number of employees) & 203.6 & 97.0 & 507.6 & 11,156.0 & 3.0 \\
   \bottomrule
\end{tabular}
\end{table}

\begin{figure*}
\begin{center}
\includegraphics[height=0.3\textwidth,width=0.7\textwidth]{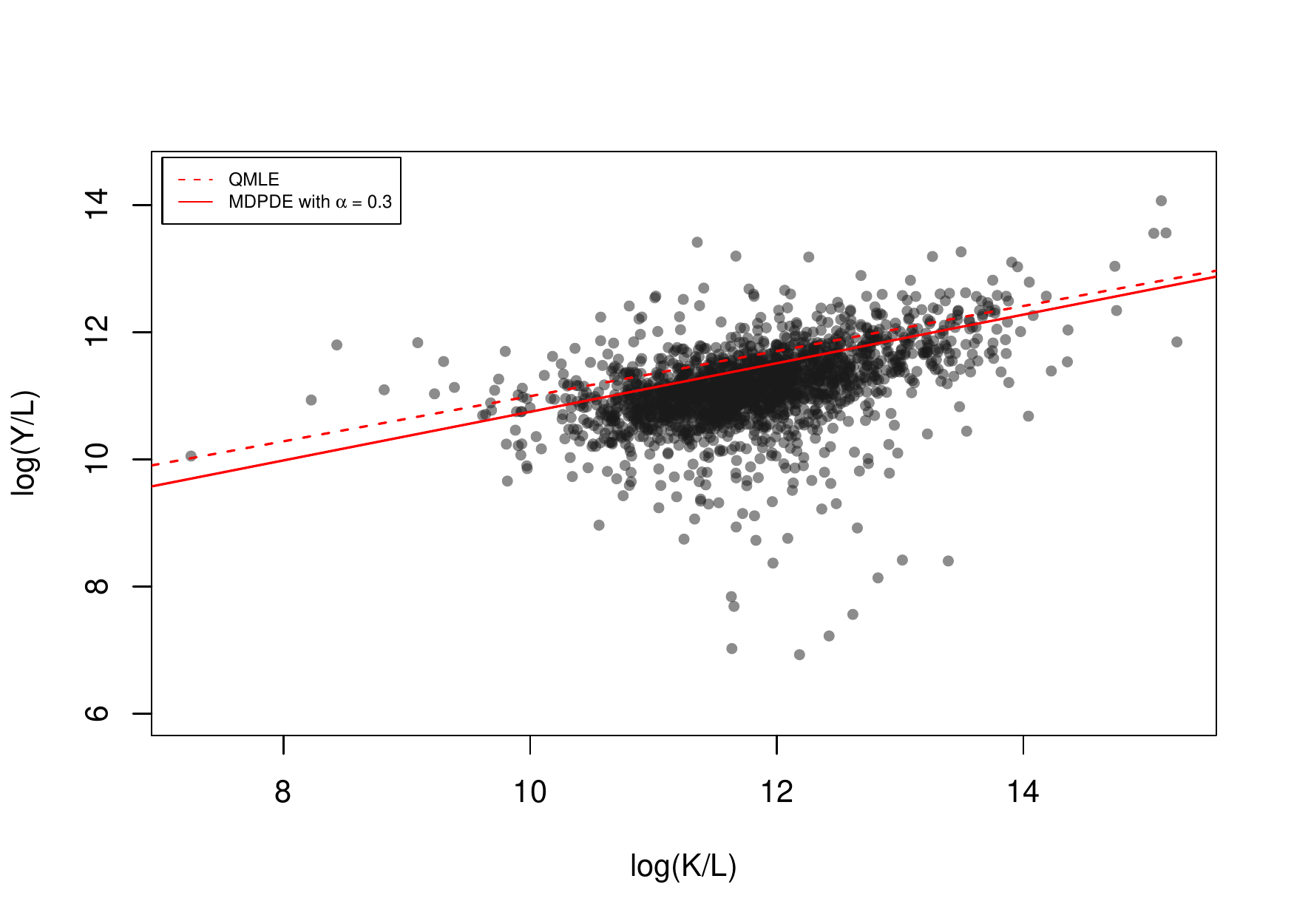}
\end{center}
\caption{\small The plot of $\log(Y/L)$ against $\log(K/L)$ and the estimated frontier lines.}\label{cd_scatterplot1}
\end{figure*}

Table~\ref{tab:description} provides summary statistics, including means, medians, and standard deviations. For all variables, the mean value is much larger than the median value and the skewness of the value-added, capital, and labor variables are calculated to be 14.11, 12.27, and 12.98, respectively. This indicates that the distributions of all variables are severely skewed to the right. Clearly, our data set has some firms that operate with large amounts of inputs and outputs, and some firms operating with very small amounts are also included.
In particular, note that a few firms are observed to produce comparatively small output to average production, as depicted in Figure~\ref{cd_scatterplot1}, which displays the scatter plot of the pairs of $\log(Y/L)$ and $\log(K/L)$. In this study, we emphasize that these low- or high-performing firms could be influential observations, acting like outliers. As demonstrated in our simulation study, these are highly likely to have an undesirable effect on the ML estimation, which also affects the TE estimate. Hence, in the next subsection, we estimate the SF model using the QML and MDPD estimation methods. We also fit the SF model to the data set in which very low- or high-performing firms are removed and compare the results.

\subsection{Estimation results}\label{subsec:estimation-results}
In order to investigate the distribution of the technical efficiency scores, we employ the Cobb--Douglas production function assuming constant returns-to-scale. Then, logs of value-added per employee and capital stock per employee (i.e., $\log(Y/L)$ and $\log(K/L)$, respectively) are considered as augmented output and input variables in the regression model. The production function form with random error $V_i$ and technical inefficiency $U_i$ is given by
\begin{equation}\label{eq:crs-cd}
 \log(Y_i/L_i) = \beta_0 + \beta_1 \log(K_i / L_i) +V_i- U_i.
\end{equation}
In this analysis, we consider the normal and the half normal distributions as the pseudo distributions for $V$ and $U$, respectively, as in usually done in most empirical studies. That is, $V\sim N(0,\sigma_v^2)$ and $U\sim N^+(0,\sigma_u^2)$ are assumed and thus the pseudo conditional distribution for (\ref{eq:crs-cd}) is given by (\ref{NH}). The parameter $\theta=(\beta_0, \beta_1, \sigma^2, \gamma)$, where $\sigma^2=\sigma_v^2+\sigma_u^2$ and $\gamma=\sigma_u/\sigma_v$, is estimated using the QML estimator and the MDPD estimator with $\A$ between 0.05 and 1. However, we only report the results corresponding to $\A$ in $\{0.05,0.1,0.2,0.3,0.4,0.5\}$ because the MDPD estimator with $\A$ greater than 0.5 produces estimates of $\gamma$ close to the boundary.

\begin{table}[ht]
\centering
\caption{\small Estimation results of Cobb-Douglas production function}\label{tab:estimation_result_cd1}
\begin{tabular}{lrrrrrr}
  \toprule
 & \multicolumn{1}{c}{$\hat{\beta}_0$} & \multicolumn{1}{c}{$\hat{\beta}_1$} & \multicolumn{1}{c}{$\hat{\sigma}^2$} & \multicolumn{1}{c}{$\hat{\gamma}$} & \multicolumn{1}{c}{$\hat{\sigma}^2_v$} & \multicolumn{1}{c}{$\hat{\sigma}^2_u$} \\
 \toprule
    QMLE          & 7.450(0.125) &0.354(0.010) &0.570(0.014) &1.692(0.075) &0.148 &0.423  \\ \hline
  $\alpha = 0.05$ & 7.303(0.114) &0.363(0.010) &0.463(0.011) &1.508(0.065) &0.141 &0.322 \\
  $\alpha = 0.10$ & 7.179(0.109) &0.369(0.009) &0.384(0.010) &1.345(0.065) &0.137 &0.247  \\
  $\alpha = 0.20$ & 7.022(0.107) &0.378(0.009) &0.298(0.011) &1.151(0.079) &0.128 &0.170  \\
  $\alpha = 0.30$ & 6.929(0.110) &0.382(0.009) &0.250(0.012) &1.010(0.096) &0.124 &0.126  \\
  $\alpha = 0.40$ & 6.858(0.115) &0.384(0.009) &0.213(0.014) &0.840(0.126) &0.125 &0.088  \\
  $\alpha = 0.50$ & 6.773(0.133) &0.386(0.010) &0.175(0.019) &0.552(0.233) &0.134 &0.041  \\
  \bottomrule
  \multicolumn{7}{l}{\footnotesize Notes: the figures in parentheses denote standard errors.}
\end{tabular}
\end{table}

Table~\ref{tab:estimation_result_cd1} presents the QML and the MDPD estimation results. The figures in parentheses denote the standard errors. There are significant differences between the QML estimates and the MDPD estimates. The estimates of $\beta_0$, $\sigma^2$, and $\gamma$ show a decreasing trend as $\A$ increases, which is similar to the simulation result in which downward outliers exist, as shown in Table~\ref{tab3}. However, the estimates of $\beta_1$ vary to some extent according to the estimators. It is important to note that the QML estimator produces a relatively large estimate of $\sigma_u^2$. The scatter plot of observations and the estimated frontier lines are displayed in Figure~\ref{cd_scatterplot1}. The dashed and solid lines represent the frontier production function estimated by the QML estimate and the MDPD estimate with $\A=0.3$, respectively. As shown in the figure, the fact that the dashed line lies over the solid line, along with the estimation results in Table~\ref{tab:estimation_result_cd1}, presumably indicates that the data set contains observations acting like downward outliers.

\begin{table}[ht]
\centering
\caption{\small The MCS test results for selecting the optimal $\A$}\label{tab:choice_alpha1}
\begin{tabular}{cccc}
 \toprule
 $T_0$ &$sim(T_0, T_1)$ & $\max(sim^*(T_0,T_1))$ & $H_0$ \\
  \midrule
    QMLE            & 0.04588  & 0.01533  &Rej.  \\
  $\alpha = 0.05$   & 0.03808  & 0.02058  &Rej.  \\
  $\alpha = 0.10$   & 0.03117  & 0.02302  &Rej.  \\
  $\alpha = 0.20$   & 0.02233  & 0.01849  &Rej.  \\
  $\alpha = 0.30$   & 0.01626  & 0.02756  &Acc.  \\
  $\alpha = 0.40$   & 0.01007  & 0.01944  &Acc.  \\
  $\alpha = 0.50$   &       0  &       0  &Acc.  \\
  \bottomrule
  \multicolumn{4}{l}{\footnotesize Notes: $T_1$ denotes the MDPD estimator with $\A=0.5$}\\
\end{tabular}
\end{table}

For this reason, we first investigate whether outliers exist. To this end, we conduct the MCS test procedure introduced in subsection \ref{subsec:choice-alpha} at a significance level of 1\%, that is the case of $m=99$. A bootstrap sample, $\{((Y/L)^*_i, K_i/L_i)\}_{i=1}^n$, is generated from $(Y/L)^*_i=\hat{\beta}_{0, T_0} + \hat{\beta}_{1,T_0} \log(K_i / L_i) +\tilde{V}_i- \tilde{U}_i$, where $\tilde{V}\sim\ N(0,\hat{\sigma}_{v, T_0}^2)$ and $\tilde{U}\sim \ N^+(0,\hat{\sigma}_{u, T_0}^2)$. First, we compare the QML estimator ($T_0$) and the MDPD estimator with $\A=0.5$ $(T_1)$. In this case, the similarity index, $sim(T_0,T_1)$, and the maximum value of $sim^*(T_0,T_1)$ are calculated to be 0.046 and 0.015, respectively. Since $sim(T_0,T_1)$ is larger than the maximum of $sim^*(T_0,T_1)$, we reject the null hypothesis of $\theta=\hat{\theta}_{T_0}$, signifying that outliers do exist in the data. Next, we repeat the MCS test to select an optimal $\A$. The test results are summarized in Table~\ref{tab:choice_alpha1} and show that the optimal value of the tuning parameter corresponds to $\A=0.3$.  We therefore conclude that the optimal estimate of the Cobb--Douglas production model should be $\hat{\log}(Y/L) = 6.929 + 0.382 \log(K/L)$ with $\hat{\sigma}_v^2=0.124$ and $\hat{\sigma}_u^2=0.126$, which corresponds to the MDPD estimate with $\A=0.3$, and, thus, TEs should be calculated using the MDPD estimate.

\begin{figure*}
\begin{center}
\includegraphics[height=0.3\textwidth,width=0.7\textwidth]{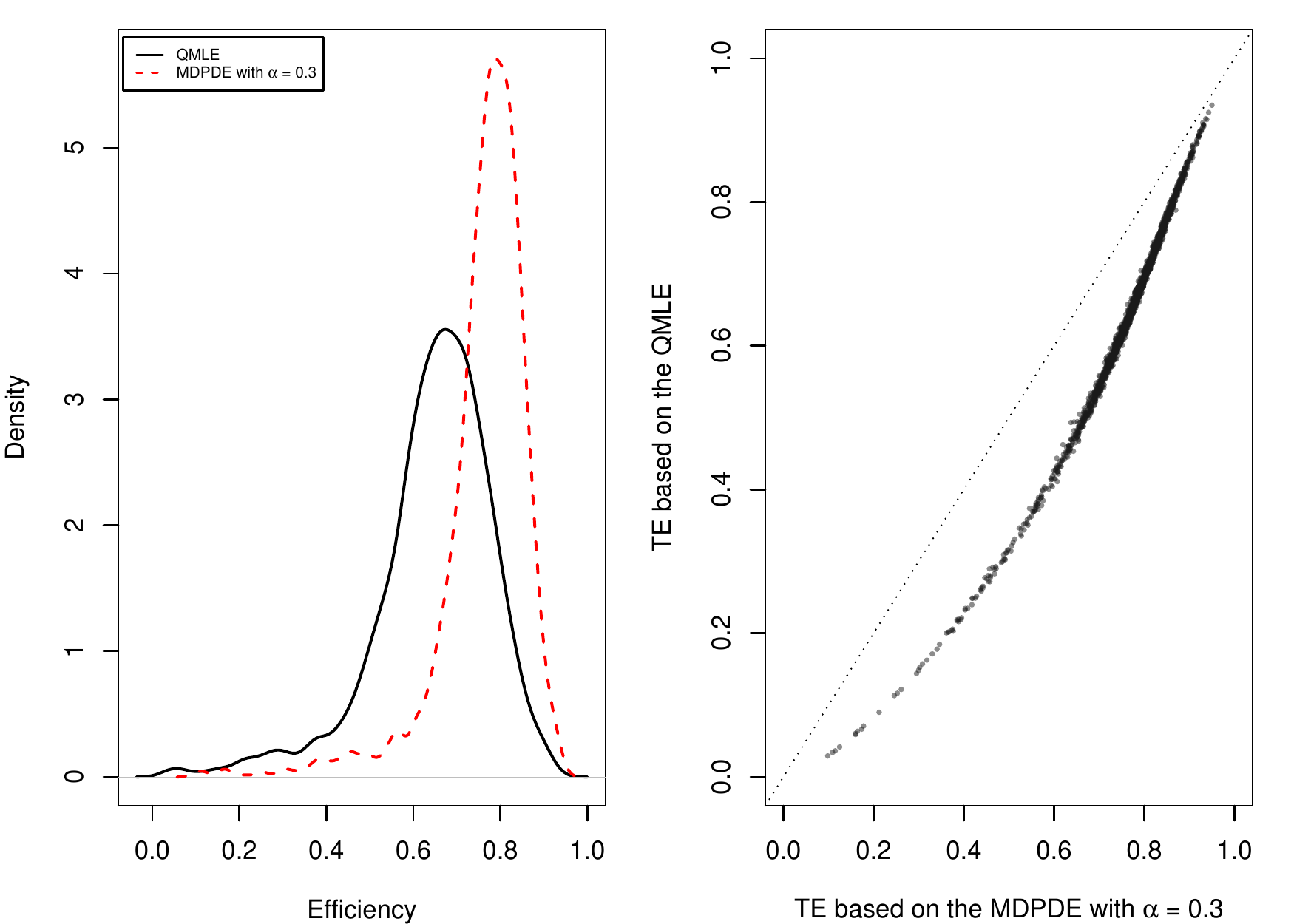}
\end{center}
\caption{Density estimates of TEs (L) and the scatter plot (R) of (TE$_{MD,\A=0.3}$, TE$_{ML}$).}\label{distribution_of_te_cd1}
\end{figure*}

Accordingly, we calculate the TEs based on the MDPD estimate with $\A=0.3$ using the Battese and Coelli (1988) estimator. Denote by TE$_{ML}$ and TE$_{MD}$ the TEs calculated using the QML and MDPD estimates, respectively. For comparison, we also compute the TE$_{ML}$ (see Figure~\ref{distribution_of_te_cd1}). Here, the left panel depicts the estimated densities of TE$_{ML}$ (black solid line) and TE$_{MD,\A=0.3}$ (red dashed line), and the right panel displays the scatter plot of pairs (TE$_{MD,\A=0.3}$, TE$_{ML}$). Note that the QML estimate yields comparatively lower TE scores than does the MDPD estimate, mainly owing to the large estimate of $\sigma_u^2$.
This result implies that if we were to rely only on the QML estimate, most of the firms would be measured as performing {\it worse} than they did in reality.

\begin{figure*}
\begin{center}
\includegraphics[height=0.3\textwidth,width=0.7\textwidth]{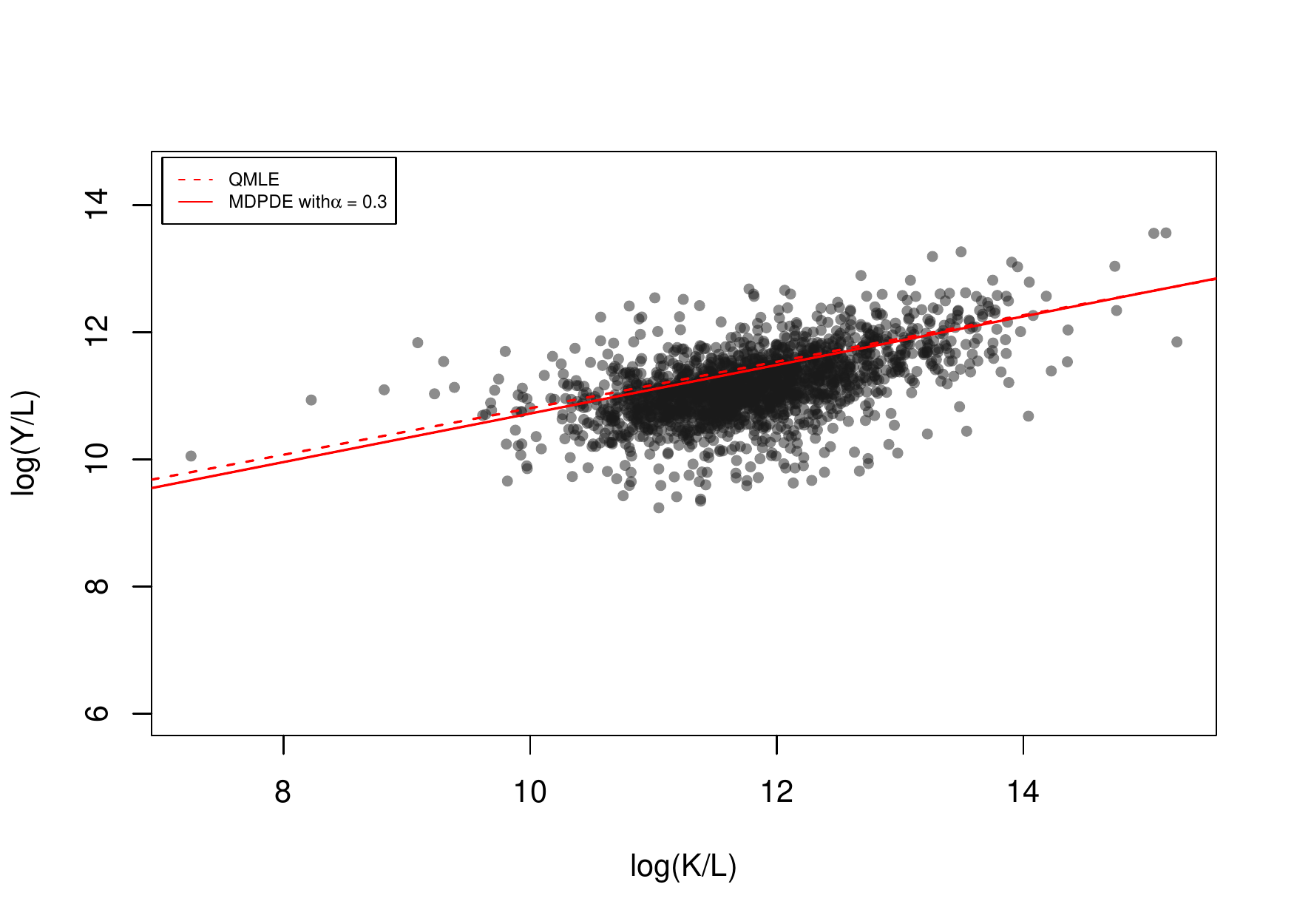}
\end{center}
\caption{The plot of the cleaned data with the estimated frontier lines.}\label{cd_scatterplot2}
\end{figure*}

\begin{table}[ht]
\centering
\caption{\small Estimation results of Cobb-Douglas production function after removing very low- or high-performing firms}\label{tab:estimation_result_cd2}
\begin{tabular}{lrrrrrr}
  \toprule
 & \multicolumn{1}{c}{$\hat{\beta}_0$} & \multicolumn{1}{c}{$\hat{\beta}_1$} & \multicolumn{1}{c}{$\hat{\sigma}^2$} & \multicolumn{1}{c}{$\hat{\gamma}$} & \multicolumn{1}{c}{$\hat{\sigma}^2_v$} & \multicolumn{1}{c}{$\hat{\sigma}^2_u$} \\
  \toprule
  QMLE            &  7.150(0.123) & 0.365(0.010) &0.303(0.020) & 0.977(0.129) & 0.155 & 0.148  \\ \hline
  $\alpha = 0.05$ &  7.104(0.119) & 0.369(0.010) &0.291(0.019) & 0.980(0.122) & 0.148 & 0.143  \\
  $\alpha = 0.10$ &  7.059(0.115) & 0.372(0.009) &0.279(0.017) & 0.979(0.117) & 0.142 & 0.137  \\
  $\alpha = 0.20$ &  6.976(0.112) & 0.378(0.009) &0.254(0.015) & 0.954(0.113) & 0.133 & 0.121  \\
  $\alpha = 0.30$ &  6.905(0.112) & 0.382(0.009) &0.227(0.014) & 0.885(0.122) & 0.127 & 0.100  \\
  $\alpha = 0.40$ &  6.840(0.118) & 0.384(0.009) &0.199(0.016) & 0.753(0.152) & 0.127 & 0.072  \\
  \bottomrule
  \multicolumn{7}{l}{\footnotesize Notes: the figures in parentheses denote standard errors.}
\end{tabular}
\end{table}

\begin{table}[ht]
\centering
\caption{\small The MCS test results for selecting the optimal $\A$ after removing very low- or high-performing firms}\label{tab:choice alpha2}
\begin{tabular}{cccc}
  \toprule
 $T_0$ &$sim(T_0, T_1)$ & $\max(sim^*(T_0,T_1))$ & $H_0$ \\
  \midrule
    QMLE            & 0.02321  & 0.06078  &Acc.   \\
  $\alpha = 0.05$   & 0.02146  & 0.02222  &Acc.   \\
  $\alpha = 0.10$   & 0.01949  & 0.03080  &Acc.   \\
  $\alpha = 0.20$   & 0.01475  & 0.03812  &Acc.    \\
  $\alpha = 0.30$   & 0.00859  & 0.02283  &Acc.    \\
  $\alpha = 0.40$   & 0        & 0        &Acc.    \\
\bottomrule
 \multicolumn{4}{l}{\footnotesize Notes: $T_1$ denotes the MDPD estimator with $\A=0.4$}\\
\end{tabular}
\end{table}

\begin{figure*}
\begin{center}
\includegraphics[height=0.3\textwidth,width=0.7\textwidth]{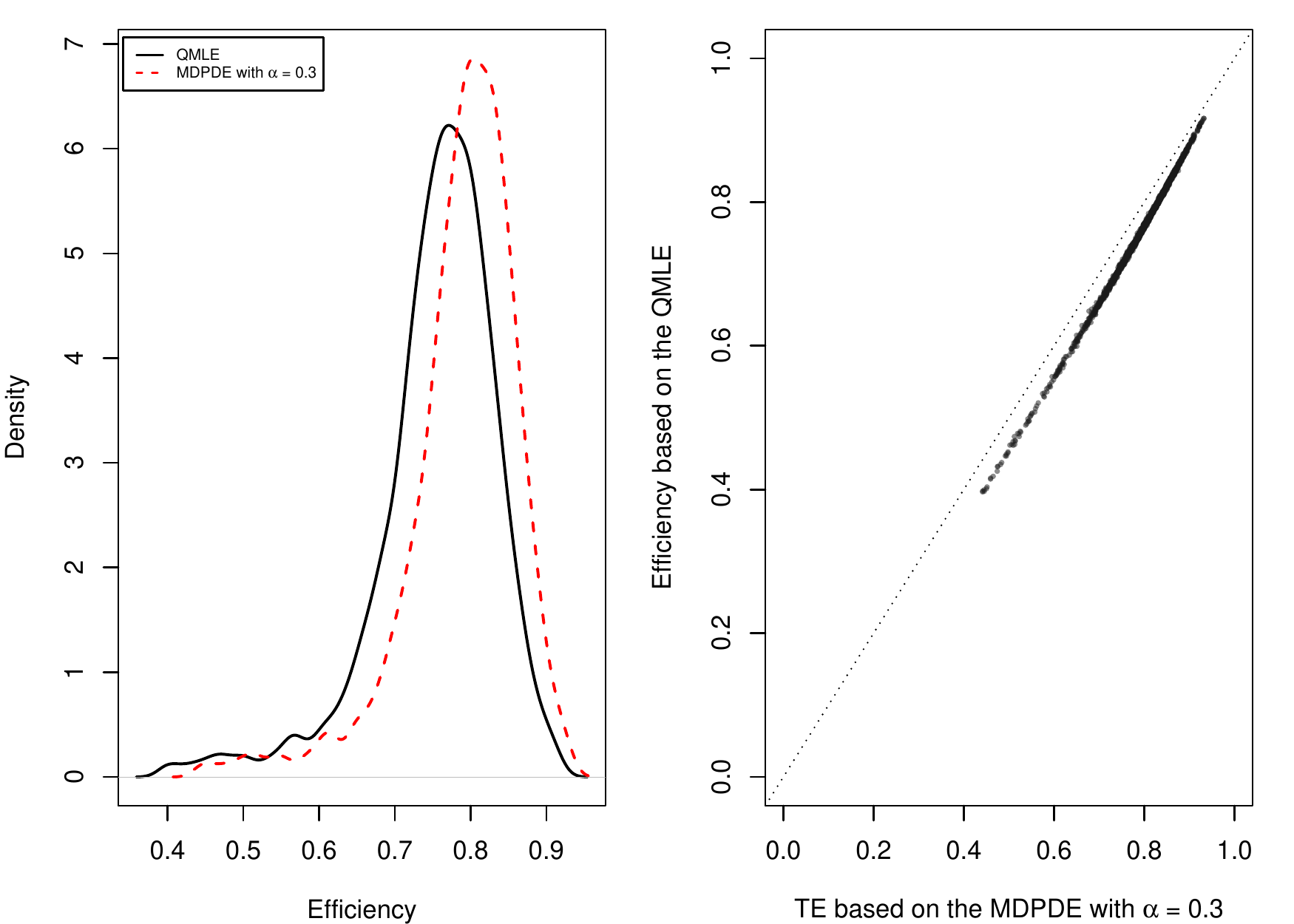}
\end{center}
\caption{Density estimates of TEs (L) and the scatter plot (R) of (TE$_{MD,\A=0.3}$, TE$_{ML}$) after removing very low- or high-performing firms.}\label{distribution_of_te_cd2}
\end{figure*}

Finally, in order to illustrate the behaviors of the QML estimator and the MDPD estimator in the absence of very low- or high-performing firms, we additionally estimate the model (\ref{eq:crs-cd}) based on the data set in which such observations are removed. To get the cleaned data, we run an OLS regression with $\log(Y_i/L_i) = \beta_0 + \beta_1 \log(K_i / L_i) +\epsilon_i$, and then just eliminate in the original data the firms of which absolute value of the studentized residual is larger than 3. The cleaned data and the estimated frontier lines are depicted in Figure~\ref{cd_scatterplot2}, in which we can see that a few high- and several low-performing firms are removed.

The estimation results are reported in Table \ref{tab:estimation_result_cd2}. Compared with the figures in Table \ref{tab:estimation_result_cd1}, it can be seen that differences between the QML estimate and the MDPD estimates become comparatively small. This is consistent with the results of the MCS tests shown in Table \ref{tab:choice alpha2}, where the test results indicate that all the estimates under consideration are close and outliers are absent. The behavior of the MDPD estimates with small $\A$ is observed to be similar to that of the QML estimates. Based on these results, the model with the QML estimate would be optimal if one can validate that $V$ and $U$ follow the normal and the half-normal distributions, respectively. For the moment, we do not, however, assert the QML estimate as the best one because it is not easy to check out the distributional assumptions. In the present case, we emphasize that the choice of $\A$ is not crucial because the QML estimate and the MDPD estimates show similar results in such cases, not making a significant difference between TE$_{ML}$ and TE$_{MD}$. As can be seen in Figure \ref{distribution_of_te_cd2}, the QML estimate and the MDPD estimate with $\A=0.3$ yield similar TEs comparing with those in Figure \ref{distribution_of_te_cd1}.

In summary, our data analysis strongly suggest that the MDPD estimator can be a promising estimator for the SFA framework in the the presence of very low- or high- performing firms.
As mentioned earlier, the choice of an optimal $\A$ is an important issue particularly when outliers are suspected in the data.  While we introduced the procedure of Durio and Isaia (2011) as the selection rule, the implementation of the procedure could be computationally burdensome, especially when considering many explanatory variables. For other statistical models, as mentioned in subsection~\ref{subsec:minim-dens-power}, existing studies have found that the MDPD estimator with a small $\A$ is robust enough against outliers, while maintaining efficiency, when there are no outliers. Thus, based on previous studies and results of our simulation and empirical studies, we recommend values of $\A$ in $\left[0.1, 0.4\right]$ in situations in which selecting an optimal $\A$ is difficult.

\section{Conclusion}\label{sec:conclusion}
This study has proposed a robust estimation method for stochastic frontier models. Our robust estimator is constructed by minimizing the empirical version of the density power divergence introduced by Basu et al. (1998). In particular, the conditional density of the normal--truncated normal(or exponential) SF model is used in constructing the MDPD estimator regardless of the distributions of $V$ and $U$, and its asymptotic and robust properties are investigated.
The selection rule of an optimal $\A$ is also introduced, adapting the procedure of Durio and Isaia (2011). Our simulation results indicate that the ML estimator is severely compromised by outliers. In contrast, the MDPD estimator with a small $\A$ shows strong robustness against outliers, with little loss in asymptotic efficiency relative to the ML estimator. Therefore, the proposed MDPD estimation method can be used when outliers are suspected to contaminate data. We also apply the estimation method to a real data set having very low- or high-performing observations to illustrate the behaviors of the QML and the MDPD estimators. Our empirical study suggests that the estimator could be suitable for the case in which a few observations perform uniquely well or poorly, as often occurs in empirical studies.

Although we focus on a cross-sectional model, the estimation method can be extended to general SF models including panel models. We leave this extension as possible areas of future research.

\section{Appendix}\label{sec:appendix}
In this appendix, we provide proofs for the theorems and propositions stated in subsections \ref{subsec:asym-prop-mdpd} and \ref{subsec:influence-ftn}.\\

\noindent{\bf Proof of Theorem~\ref{T1}}\\
First, note that by assumption {\bf A2},
\begin{eqnarray*}
\frac{1}{n}\sum_{i=1}^n H_\alpha(X_i,Y_i;\theta)&\stackrel{a.s.}{\longrightarrow}&  \E\big[ H_\alpha(X,Y;\theta)] = \iint H_\A(x,y;\theta)f(y|x)f_X(x)dydx\\
&&\hspace{2.4cm}=\E\big[ d_\alpha (f(\cdot|X), f_{\theta}(\cdot|X))\big]-\iint \frac{1}{\alpha}f^{1+\A}(y|x)f_X(x)dydx
\end{eqnarray*}
and $\E\big[ H_\alpha(X,Y;\theta)]$ has a minimum at $\theta^*_\A$.
In order to show the consistency of the MDPD estimator, it is therefore necessary to derive the strong uniform convergence of the objective function. That is,
\begin{eqnarray}\label{A1}
\sup_{\theta \in \Theta}\left|\frac{1}{n}\sum_{i=1}^n H_\alpha(X_i,Y_i;\theta)-\E\big[ H_\alpha(X,Y;\theta)\big]\right|\  \stackrel{a.s.}{\longrightarrow} \ 0,
\end{eqnarray}
which in turn implies that
\[\hat\theta_{\alpha,n}=\argmin{\theta\in \Theta}\,\frac{1}{n}\sum_{i=1}^n H_\alpha(X_i,Y_i;\theta)\  \stackrel{a.s.}{\longrightarrow}\ \theta^*_\A=\argmin{\theta\in \Theta}\, \E\big[ H_\alpha(X,Y;\theta)].\]
While there are several sets of conditions to guarantee (\ref{A1}), we employ the following regularity conditions: (\textit{i}) $\Theta$ is compact; (\textit{ii}) $H_\alpha(x,y;\theta)$ is continuous in $\theta$, for all $x,y$; and (\textit{iii}) $H_\alpha(X,Y;\theta)$ is dominated by an integrable random variable that is free from $\theta$ (see, for example, chapter 16 in Ferguson, 1996). Here, it is readily to see that (\textit{ii}) holds by the continuity of $g(x,\beta)$. Also, in view of assumption {\bf A4}, we have
\begin{eqnarray*}
|H_\alpha(X,Y;\theta)|\leq \int C^\alpha f_\theta(y|X)dy+\Big(1+\frac{1}{\alpha}\Big)C^\alpha\leq \Big(2+\frac{1}{\alpha}\Big)C^\alpha,
\end{eqnarray*}
which establishes the theorem. \hfill{$\Box$}\\

Hereafter, we denote $H_i(\theta):=H_\alpha(X_i,Y_i;\theta)$ and $f_\theta:=f_\theta(y|x)$ for notational convenience. Further, we shall use the relation $ A \lesssim B$, where $A$ and $B$ are nonnegative, to denote that $A\leq CB$ for some constant $C>0$. For example, $A\les 1$ means that $A$ is bounded by some constant $C$.

\begin{lm}\label{L.1} Suppose that  assumption {\bf A7} holds. If $\tilde{\theta}_{\A,n}$ converges almost surely to $\theta^*_\A$, then
\begin{eqnarray}\label{l1}
\frac{1}{n}\sum_{i=1}^n \pa^2_{\theta \theta^T} H_i (\tilde{\theta}_{\A,n}) \  \stackrel{a.s.}{\longrightarrow}\ \E\big[\pa^2_{\theta \theta^T} H_i(\theta^*_\A)\big].
\end{eqnarray}
\end{lm}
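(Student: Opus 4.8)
The plan is to prove (\ref{l1}) through the standard device of dominating the quantity of interest by a \emph{uniform} deviation term plus a \emph{continuity} term, each vanishing for its own reason. First I would add and subtract $\E[\pa^2_{\theta \theta^T} H_i(\tilde{\theta}_{\A,n})]$ and apply the triangle inequality, bounding the deviation at the random point $\tilde{\theta}_{\A,n}$ by its supremum over $\Theta$:
\begin{align*}
\Big\| \frac{1}{n}\sum_{i=1}^n \pa^2_{\theta \theta^T} H_i(\tilde{\theta}_{\A,n}) &- \E\big[\pa^2_{\theta \theta^T} H_i(\theta^*_\A)\big]\Big\| \\
&\leq \sup_{\theta\in\Theta}\Big\| \frac{1}{n}\sum_{i=1}^n \pa^2_{\theta \theta^T} H_i(\theta) - \E\big[\pa^2_{\theta \theta^T} H_i(\theta)\big]\Big\| \\
&\quad + \Big\| \E\big[\pa^2_{\theta \theta^T} H_i(\tilde{\theta}_{\A,n})\big] - \E\big[\pa^2_{\theta \theta^T} H_i(\theta^*_\A)\big]\Big\|.
\end{align*}

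For the first (uniform) term, I would invoke a uniform strong law of large numbers for i.i.d.\ random functions, such as the version in chapter 16 of Ferguson (1996) already cited in the proof of Theorem~\ref{T1}. Its hypotheses are: compactness of $\Theta$, which holds by assumption {\bf A1} (equivalently, by the standing choice of $\Theta$ in Remark~2); continuity of $\theta\mapsto \pa^2_{\theta \theta^T} H_i(\theta)$ for (almost) every $(X_i,Y_i)$, which follows from the smoothness of the pseudo conditional density $f_\theta$ under [NT] or [NE] together with the differentiability of $g(x,\beta)$; and the existence of an integrable envelope dominating the family, namely $\sup_{\theta\in\Theta}\|\pa^2_{\theta \theta^T} H_i(\theta)\|$, whose integrability is precisely assumption {\bf A7}. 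Under these, the supremum tends to $0$ almost surely.

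For the second (continuity) term, I would first argue that $\theta\mapsto \E[\pa^2_{\theta \theta^T} H_i(\theta)]$ is continuous on $\Theta$: pointwise continuity of $\pa^2_{\theta \theta^T} H_i(\theta)$ in $\theta$ combined with the integrable envelope from {\bf A7} lets me pass to the limit inside the expectation by dominated convergence. Since $\tilde{\theta}_{\A,n}\to\theta^*_\A$ almost surely by hypothesis, this continuity forces $\E[\pa^2_{\theta \theta^T} H_i(\tilde{\theta}_{\A,n})]\to \E[\pa^2_{\theta \theta^T} H_i(\theta^*_\A)]$ almost surely. Combining the two bounds then yields (\ref{l1}).

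The hard part will be verifying the two structural hypotheses that feed the uniform SLLN, namely the $\theta$-continuity of the second derivative and the integrable domination. The domination is handed to me by {\bf A7}, so the substantive work is to confirm, for the [NT] and [NE] specifications, that $H_\A(x,y;\theta)$ is genuinely $C^2$ in $\theta$; in particular, the term $\int f_\theta^{1+\A}(y|x)\,dy$ must be twice differentiable with derivatives obtained by interchanging differentiation and integration, which requires the usual dominating-function justification for passing $\pa^2_{\theta \theta^T}$ under the integral sign. Once continuity and domination are secured, the remainder is the routine triangle-inequality-plus-dominated-convergence pattern above.
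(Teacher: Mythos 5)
Your proof is correct, but it follows a different route from the paper's. You decompose the error globally: a uniform strong law over all of $\Theta$ (Ferguson, Ch.~16, the same tool used for Theorem~\ref{T1}) plus continuity of the map $\theta\mapsto\E\big[\ptt H_i(\theta)\big]$, established by dominated convergence using the envelope from {\bf A7}. The paper instead localizes: following Lemma A2 of Ling and McAleer (2010), for each $\epsilon>0$ it chooses a radius $\eta_\epsilon$ so that the maximal deviation of $\frac{1}{n}\sum_i \ptt H_i(\theta)$ from the \emph{fixed} limit $\E\big[\ptt H_i(\theta^*_\A)\big]$, taken over the ball $V_0(\eta_\epsilon)=\{\theta:\|\theta-\theta^*_\A\|\leq\eta_\epsilon\}$, is eventually small, and then splits the probability of the bad event according to whether $\tilde{\theta}_{\A,n}$ has entered that ball, the complementary event being killed by the assumed a.s.\ convergence. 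The trade-off: your version needs continuity of $\ptt H_i(\cdot)$ on the whole compact $\Theta$ and an extra dominated-convergence step for the second (continuity) term, but it reuses exactly the ULLN machinery already deployed in the consistency proof and avoids the $\eta_\epsilon$ bookkeeping; the paper's version only needs behavior of $\ptt H_i$ near $\theta^*_\A$ and folds everything into a single local uniform-convergence statement. Both arguments rest on {\bf A7} for integrable domination and on the smoothness of $H_\A$ in $\theta$ (which, as you correctly flag, must be checked for [NT] and [NE], including differentiation under the integral sign in $\int f_\theta^{1+\A}(y|x)\,dy$ --- the paper does this implicitly via the explicit derivative bounds in Lemma~\ref{L.3}). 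No gap; just a different, equally valid decomposition.
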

\begin{proof}
 Since $\E\big[\ptt H_i(\theta^*_\A)\big]$ is finite by assumption {\bf A7}, following the argument similar to that used in Lemma A2 in Ling and McAleer (2010), for any $\epsilon>0$, we can take a $\eta_\epsilon>0$ such that
\begin{eqnarray}\label{l2}
\lim_{l\rightarrow\infty} P \left( \max_{n\geq l} \sup_{\theta \in V_0 (\eta_\epsilon)}\frac{1}{n} \Big\|\sum_{i=1}^n \big\{\pa^2_{\theta \theta^T} H_i (\theta) -\E\big[ \pa^2_{\theta \theta^T} H_i(\theta^*_\A)\big]\big\}\Big\| \geq \epsilon \right)=0,
\end{eqnarray}
where $V_0 (\eta_\epsilon)=\{\theta: \|\theta-\theta^*_\A\|\leq \eta_\epsilon\}$. In addition, since $\tilde{\theta}_{\A,n}$ converges almost surely to $\theta^*_\A$, we also have that for any $\epsilon>0$,
\[\lim_{l\rightarrow\infty}P\left( \max_{n\geq l} \|\tilde{\theta}_{\A,n} -\theta^*_\A \| \geq \epsilon \right)=0.\]
 Using this and (\ref{l2}), we have
\begin{eqnarray*}
&&\hspace{-0.5cm}P \left( \max_{n\geq l} \frac{1}{n} \Big\|\sum_{i=1}^n \big\{\pa^2_{\theta \theta^T} H_i (\tilde{\theta}_{\A,n}) -\E\big[ \pa^2_{\theta \theta^T} H_i(\theta^*_\A)\big]\big\}\Big\| \geq \epsilon \right)\\
&&\hspace{-0.5cm}\leq P\left( \max_{n\geq l} \|\tilde{\theta}_{\A,n} -\theta^*_\A \| \geq \eta_\epsilon \right)
+ P\left( \max_{n\geq l} \|\tilde{\theta}_{\A,n} -\theta^*_\A \| \leq \eta_\epsilon, \max_{n\geq l} \frac{1}{n} \Big\|\sum_{i=1}^n \big\{\pa^2_{\theta \theta^T} H_i (\tilde{\theta}_{\A,n}) -\E\big[ \pa^2_{\theta \theta^T} H_i(\theta^*_\A)\big]\big\}\Big\| \geq \epsilon \right)\\
&&\hspace{-0.5cm}\leq P\left( \max_{n\geq l} \|\tilde{\theta}_{\A,n} -\theta^*_\A \| \geq \eta_\epsilon \right)
+ P \left( \max_{n\geq l} \sup_{\theta \in V_0 (\eta_\epsilon)}\frac{1}{n} \Big\|\sum_{i=1}^n \big\{\pa^2_{\theta \theta^T} H_i (\theta) -\E\big[ \pa^2_{\theta \theta^T} H_i(\theta^*_\A)\big]\big\}\Big\| \geq \epsilon \right)\\
&&\hspace{-0.5cm}\stackrel{a.s.}{\longrightarrow}\ 0,
\end{eqnarray*}
which  asserts (\ref{l1}).
\end{proof}

\noindent{\bf Proof of Theorem~\ref{T2}}\\
Note that $\E\big[\pt H(\theta^*_\A)\big]= \E\big[\pt d_\A(f(\cdot|X),f_{\theta^*_\A}(\cdot|X))\big]=0$. Since $\E\big[ \pa_{\theta} H(\theta^*_\A)\pa_{\theta^T} H(\theta^*_\A)\big]$ is finite by assumption {\bf A6} and $\{\pa_\theta H_i(\theta^*_\A)\}$ is a sequence of i.i.d. random vectors, it follows from the multivariate central limit theorem that
\begin{eqnarray}\label{A2}
\frac{1}{\sqrt{n}} \sum_{i=1}^n \pa_\theta H_i(\theta^*_\A)\ \stackrel{d}{\longrightarrow}\ N\big(0,K_\A\big).
\end{eqnarray}
Applying Taylor's expansion to $\sum_i \pa_\theta H_i(\theta)$, we have
\[ 0=\frac{1}{\sqrt{n}} \sum_{i=1}^n \pa_\theta H_i(\hat \theta_{\A,n})=\frac{1}{\sqrt{n}} \sum_{i=1}^n \pa_\theta H_i(\theta^*_\A)+ \frac{1}{n}\sum_{i=1}^n \pa^2_{\theta\theta^T} H_i(\tilde{\theta}_{\A,n})\sqrt{n}(\hat{\theta}_{\A,n}-\theta_0),\]
where $\tilde{\theta}_{\A,n}$ lies between $\hat \theta_{\A,n}$ and $\theta^*_\A$. Therefore, Theorem~\ref{T2} is asserted from Lemma~\ref{L.1} and~(\ref{A2}).\hfill{$\Box$}\\

We now present derivatives of the pseudo conditional densities stated in subsection \ref{subsec:mdpd-estim-stoch} and some lemmas to verify Proposition \ref{P1}. \\

\noindent$\bullet$ Derivatives of (\ref{NT})\\
Letting
\begin{eqnarray*}
A:=1-\Phi\Big(-\frac{\mu}{\sigma_u}\Big),\quad \Delta_1:=\frac{y-g(x,\beta)+\mu}{\sigma},\quad\mbox{and}\quad \Delta_2:=\frac{\mu}{\sigma\lambda}-\frac{\lambda}{\sigma}(y-g(x,\beta)),
\end{eqnarray*}
by simple calculation, we can show that
\begin{eqnarray}\label{deriv:NT}
&&f_\theta=\frac{1}{\sigma A}\phi(\Delta_1)\Phi(\Delta_2),\nonumber\\
&&\left.
\begin{array}{l}\displaystyle
\pa_\beta f_\theta =f_\theta\Big\{\frac{1}{\sigma}\Delta_1 +\frac{\lambda}{\sigma}\frac{\phi(\Delta_2)}{\Phi(\Delta_2) }\Big\} \pa_\beta g(x,\beta):=f_\theta D_{1,\beta} \pa_\beta g(x,\beta),\\
\displaystyle
\pa_\mu f_\theta =f_\theta\Big\{-\frac{1}{\sigma_u A}\phi(-\frac{\mu}{\sigma_u})+\frac{1}{\sigma}\Big(-\Delta_1+\frac{1}{\lambda}\frac{\phi(\Delta_2)}{\Phi(\Delta_2)}\Big)\Big\}:=f_\theta D_{1,\mu},\\
\displaystyle
\pa_{\sigma_v} f_\theta =f_\theta \Big\{-\frac{\sigma_v}{\sigma^2}+\Delta_1^2\frac{\sigma_v}{\sigma^2}+\frac{\phi(\Delta_2)}{\Phi(\Delta_2)}\pa_{\sigma_v}\Delta_2\Big\}:=f_\theta D_{1,v},\\
\displaystyle
\pa_{\sigma_u} f_\theta =f_\theta \Big\{-\frac{\sigma_u}{\sigma^2}+\frac{1}{A}\phi(-\frac{\mu}{\sigma_u})\frac{\mu}{\sigma_u^2}+\Delta_1^2\frac{\sigma_u}{\sigma^2}+\frac{\phi(\Delta_2)}{\Phi(\Delta_2)}\pa_{\sigma_u}\Delta_2\Big\}:=f_\theta D_{1,u},
\end{array}
\right\}
\end{eqnarray}
where
\begin{eqnarray}
&&\pa_{\sv} \Delta_2 =\frac{\mu}{\sigma}\Big(\frac{1}{\sigma_u}-\frac{1}{\lambda}\frac{\sigma_v}{\sigma^2}\Big)+\frac{\lambda}{\sigma}(y-g(x,\beta)) \Big(\frac{\sigma_v}{\sigma^2}+\frac{1}{\sigma_v}\Big):=h_v(\mu,\sigma_v,\sigma_u)-\Big(\frac{\sigma_v}{\sigma^2}+\frac{1}{\sigma_v}\Big) \Delta_2, \label{A3}\\
&&\pa_{\su} \Delta_2 =-\frac{\mu}{\sigma\lambda}\Big(\frac{\sigma_u}{\sigma^2}+\frac{1}{\sigma_u}\Big)+\frac{\lambda}{\sigma}(y-g(x,\beta)) \Big(\frac{\sigma_u}{\sigma^2}-\frac{1}{\sigma_u}\Big):=h_u(\mu,\sigma_v,\sigma_u)-\Big(\frac{\sigma_u}{\sigma^2}-\frac{1}{\sigma_u}\Big) \Delta_2.\label{A4}
\end{eqnarray}

\noindent$\bullet$ Derivatives of (\ref{NE})\\
Denote  $\displaystyle \xi:=-\frac{y-g(x,\beta)}{\sigma_v}-\frac{\sigma_v}{\sigma_u}$. Then, we can express that
\begin{eqnarray}\label{deriv:NE}
&&f_\theta=\frac{1}{\sigma_u}\Phi (\xi)
\exp\Big(-\frac{\sigma_v}{\sigma_u}\xi-\frac{\sigma_v^2}{2\sigma_u^2}\Big),\nonumber\\
&&\left.
\begin{array}{l}\displaystyle
\pa_\beta f_\theta= f_\theta\left\{\frac{1}{\sigma_v}\frac{\phi (\xi)}{\Phi(\xi)}
-\frac{1}{\sigma_u}\right\}\pa_\beta g(x,\beta):=f_\theta D_{2,\beta} \pa_\beta g(x,\beta),\\
\displaystyle
\pa_{\sigma_v} f_\theta=f_\theta\left\{-\Big(\frac{\xi}{\sigma_v}+\frac{2}{\sigma_u}\Big) \frac{\phi (\xi)}{\Phi(\xi)}+\frac{\sigma_v}{\sigma_u^2}\right\}:=f_\theta D_{2,v},\\
\displaystyle
\pa_{\sigma_u} f_\theta=f_\theta\left\{\frac{\sigma_v}{\sigma_u^2}-\frac{1}{\sigma_u} +\frac{\sigma_v}{\sigma_u^2}\frac{\phi(\xi)}{\Phi(\xi)}\right\}:=f_\theta D_{2,u}.
\end{array}
\right\}
\end{eqnarray}

\begin{lm}\label{L.2} For  all $k>0$, $l\geq 0$, $m\geq0$ and $n\geq0$, we have
\begin{eqnarray*}
\sup_{x\in R} \Big\{\Phi^k(x)\,|x|^l e^{-mx} \frac{\phi^n(x)}{\Phi^n(x)}\Big\}\les 1.
\end{eqnarray*}
\end{lm}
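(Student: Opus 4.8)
The plan is to treat
$G(x) := \Phi^k(x)\,|x|^l e^{-mx}\,\phi^n(x)/\Phi^n(x)$
as a continuous function on all of $\mathbb{R}$ — which it is, since $\Phi(x)>0$ for every $x$, so the denominator never vanishes — and to reduce the bound to the behavior of $G$ at $\pm\infty$. Once I show that $G(x)\to 0$ as $x\to\pm\infty$, continuity forces $G$ to attain a finite maximum on $\mathbb{R}$; since $k,l,m,n$ are fixed, this maximum is a constant, which is exactly the assertion $\sup_x G(x)\les 1$.

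The tail at $+\infty$ is the easy one. As $x\to+\infty$ we have $\Phi(x)\to 1$, so both $\Phi^k(x)$ and $\Phi^{-n}(x)$ tend to $1$, and $G(x)$ is asymptotically a constant multiple of $|x|^l e^{-mx}\phi^n(x)=(2\pi)^{-n/2}|x|^l e^{-mx-nx^2/2}$. When $n>0$ the Gaussian factor $e^{-nx^2/2}$ dominates every polynomial and every exponential factor and forces $G(x)\to 0$; when $n=0$ the decay is supplied instead by $e^{-mx}$.

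The real work is the tail at $-\infty$, where $\Phi(x)\to 0$ and the ratio $\phi(x)/\Phi(x)$ blows up, so that $\Phi^k$ and $(\phi/\Phi)^n$ misbehave simultaneously and must be controlled together. The tool I would use is the elementary Gaussian tail estimate, which via the symmetry $\Phi(x)=1-\Phi(-x)$ gives, for $x<0$,
\[
\frac{|x|}{x^2+1}\,\phi(x)\ \le\ \Phi(x)\ \le\ \frac{1}{|x|}\,\phi(x),
\qquad\text{hence}\qquad
|x|\ \le\ \frac{\phi(x)}{\Phi(x)}\ \le\ |x|+\frac{1}{|x|}.
\]
Feeding the upper bound $\Phi(x)\le \phi(x)/|x|$ into the factor $\Phi^k$, and the bound $\phi/\Phi\le 2|x|$ (valid for $x\le -1$) into $(\phi/\Phi)^n$, yields for $x\le -1$
\[
G(x)\ \les\ \Big(\frac{\phi(x)}{|x|}\Big)^{k}|x|^l e^{m|x|}(2|x|)^n\ \les\ |x|^{\,l+n-k}\,e^{\,m|x|-kx^2/2}.
\]
Because $k>0$, the factor $e^{-kx^2/2}$ overwhelms both $e^{m|x|}$ and the power of $|x|$, so the right-hand side tends to $0$ as $x\to-\infty$. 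Combining the two tail estimates with continuity on the intervening compact interval completes the argument.

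The main obstacle is precisely this $-\infty$ analysis: one needs the two-sided estimate above to pin down the exact rates at which $\Phi(x)\to 0$ and $\phi(x)/\Phi(x)\to\infty$, after which the quadratic decay $e^{-kx^2/2}$ furnished by $\Phi^k$ (through $\phi^k$) does all the heavy lifting. The positivity hypothesis $k>0$ is exactly what is required at this point.
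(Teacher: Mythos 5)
Your argument is correct and follows essentially the same route as the paper's proof: both isolate the $x\to-\infty$ tail, extract the quadratic-exponential decay from $\Phi^k(x)$ (you via the Mills-ratio bound $\Phi(x)\le\phi(x)/|x|$, the paper via $\Phi(x)\le e^{-x^2/2}$), control $\phi(x)/\Phi(x)=O(|x|)$ there so that $e^{-kx^2/2}$ dominates, and dispose of the $x\ge 0$ side trivially using $\Phi\ge\Phi(0)$ and the decay of $\phi^n$ or $e^{-mx}$. The one caveat --- shared with the paper's own proof, whose bound $2^n\sup_{x\ge0}x^l\phi^n(x)$ is likewise infinite there --- is that the degenerate case $n=m=0$, $l>0$ is not covered (the stated supremum is actually infinite in that case), but the lemma is never invoked with those parameters.
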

\begin{proof}
Using the facts that $\Phi(x) \leq e^{-\frac{1}{2}x^2}$ for $x<0$ and $\frac{\phi(x)}{\Phi(x)}=O(-x)$ as $x\rightarrow -\infty$,
we have
\[\sup_{x\in R} \Big\{\Phi^k(x)\,|x|^l e^{-mx} \frac{\phi^n(x)}{\Phi^n(x)}\Big\} \leq  \sup_{x<0} \Big\{e^{-\frac{1}{2}x^2}\,|x|^l e^{-mx}\frac{\phi^n(x)}{\Phi^n(x)}\Big\}+2^n\sup_{x\geq0}\big\{x^l \phi^n(x)\big\}\les 1.\]
\end{proof}

\begin{lm}\label{L.3} Assume that $\Theta$ is compact and $g(x,\beta)$ is two times differentiable w.r.t. $\beta$ for all $x$. Under the cases of [NT] and [NE], we have that for $\A>0$,
\begin{eqnarray}\label{l3.1}
&&f_\theta^{\A-1} (y|x)\|\pt f_\theta (y|x)\| \les  1+\sup_{\theta \in \Theta} \|\pa_\beta g(x,\beta)\|
\end{eqnarray}
and
\begin{eqnarray}\label{l3.2}
&& f_\theta^{\A-1} (y|x) \|\ptt f_\theta (y|x)\| \les 1+\sup_{\theta\in\Theta}\|\pa_\beta g(x,\beta)\pa_{\beta^T}g(x,\beta)\|+\sup_{\theta\in\Theta}\|\pa^2_{\beta \beta^T} g(x,\beta)\|.
\end{eqnarray}
\end{lm}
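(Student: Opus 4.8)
The plan is to build both bounds directly on the explicit derivative formulas (\ref{deriv:NT}) and (\ref{deriv:NE}), which already display $\pt f_\theta$ as $f_\theta$ times a vector of $D$-factors. Writing $D_\theta$ for that vector, the first bound reduces to controlling $f_\theta^{\A-1}\|\pt f_\theta\| = f_\theta^\A\|D_\theta\|$, and since $\|\cdot\|$ is the $l_1$ norm it suffices to bound each $f_\theta^\A|D_{\cdot}|$ separately. For the second bound, differentiating $\pt f_\theta = f_\theta D_\theta$ once more yields $f_\theta^{\A-1}\ptt f_\theta = f_\theta^\A(D_\theta D_\theta^T + \pa_{\theta^T}D_\theta)$, so again I would bound the resulting entries one at a time after extracting the common weight $f_\theta^\A$.

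The key mechanism is that $f_\theta^\A$ supplies exactly the decay needed to tame every factor appearing in $D_\theta$ and $\pa_{\theta^T}D_\theta$. In the [NT] case $f_\theta^\A = (\sigma A)^{-\A}\phi^\A(\Delta_1)\Phi^\A(\Delta_2)$: the Gaussian factor $\phi^\A(\Delta_1)$ dominates any polynomial in $\Delta_1$, so the terms carrying $\Delta_1$ or $\Delta_1^2$ stay bounded, while the remaining factor $\Phi^\A(\Delta_2)$ combines with the ratios $\phi(\Delta_2)/\Phi(\Delta_2)$ and the $\Delta_2$-polynomials produced by (\ref{A3})--(\ref{A4}) into expressions of precisely the form handled by Lemma~\ref{L.2} (taking $m=0$). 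In the [NE] case $f_\theta^\A$ contributes $\Phi^\A(\xi)\exp(-\A(\sv/\su)\xi)$ up to a bounded constant, so the $e^{-mx}$ slot in Lemma~\ref{L.2} with $m=\A\sv/\su>0$ is exactly what absorbs the exponential tail, and every $f_\theta^\A|D_{2,\cdot}|$ again matches that template. Compactness of $\Theta$ together with $\inf_\Theta(\sv\wedge\su)>0$ guarantees that the prefactors $\sigma$, $\lambda$, $A=1-\Phi(-\mu/\su)$ and their powers are bounded and bounded away from their degenerate values, so they only ever contribute constants.

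The $g$-dependence is then isolated cleanly. Among the $D$-factors only $D_{1,\beta}$ (resp.\ $D_{2,\beta}$) carries the multiplier $\pa_\beta g(x,\beta)$, so the first derivative produces exactly the single term $\|\pa_\beta g\|$, giving the right-hand side $1+\sup_\Theta\|\pa_\beta g\|$ of (\ref{l3.1}). In the second derivative, the entries of $D_\theta D_\theta^T$ that carry the $\beta$-component twice yield $\pa_\beta g\,\pa_{\beta^T}g$, and differentiating the explicit $\pa_\beta g$ sitting inside $\pa_{\theta^T}D_\theta$ produces $\pa^2_{\beta\beta^T}g$; together these account for the last two terms of (\ref{l3.2}), while every remaining contribution is bounded by a constant as above.

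I expect the main obstacle to be bookkeeping rather than anything conceptual: the second derivative generates many terms, especially from differentiating the hazard-type ratio $\phi(\Delta_2)/\Phi(\Delta_2)$, which produces $-\Delta_2\,\phi(\Delta_2)/\Phi(\Delta_2)$ and $-(\phi(\Delta_2)/\Phi(\Delta_2))^2$, i.e.\ the $n=1,2$ and $l=1$ instances of Lemma~\ref{L.2}, and from the $\sigma_v,\sigma_u$-dependence of $\Delta_2$ recorded in (\ref{A3})--(\ref{A4}). The care required is to check that each such term, once the $f_\theta^\A$ weight is extracted, falls under Lemma~\ref{L.2} for a suitable choice of $(k,l,m,n)$ and that the residual prefactors are uniformly bounded on the compact $\Theta$; once this term-by-term matching is carried out, both (\ref{l3.1}) and (\ref{l3.2}) follow immediately.
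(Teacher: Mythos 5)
Your proposal is correct and follows essentially the same route as the paper: extract the weight $f_\theta^\A$ from $f_\theta^{\A-1}\pt f_\theta = f_\theta^\A D_\theta$ and from $f_\theta^{\A-1}\ptt f_\theta = f_\theta^\A(D_\theta D_{\theta}^T + \pa_{\theta^T}D_\theta)$, bound each resulting entry via Lemma~\ref{L.2} using compactness of $\Theta$ for the prefactors, and isolate the $\pa_\beta g$ and $\pa^2_{\beta\beta^T}g$ contributions exactly as in (\ref{l3.3a})--(\ref{l3.6}). Your observation that the $e^{-mx}$ slot of Lemma~\ref{L.2} with $m=\A\sigma_v/\sigma_u$ is what handles the [NE] case is a correct reading of a step the paper leaves implicit.
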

\begin{proof}
 We only consider the case of [NT] because the result for the case of [NE] can be deduced by substituting $D_{1,\cdot}$ with $D_{2,\cdot}$ and following essentially the same arguments below.

 Due to the compactness of $\Theta$, $\sigma^{\A+1}A^\A$ is bounded away from zero and $\lambda$ is bounded above (see, Remark 2). Thus, using Lemma \ref{L.2}, we have
\begin{eqnarray}
f_\theta^{\A-1}\| \pa_\beta f_\theta\|&=&
\frac{1}{\sigma^{\A+1} A^\A}\phi^\A(\Delta_1)\Phi^\A(\Delta_2)\Big|\Delta_1 +\lambda\frac{\phi(\Delta_2)}{\Phi(\Delta_2)}\Big|\|\pa_\beta g(x,\beta)\| \nonumber\\
&\lesssim& |\Delta_1|\phi^\A(\Delta_1) \|\pa_\beta g(x,\beta)\|+\Phi^\A(\Delta_2)\frac{\phi(\Delta_2)}{\Phi(\Delta_2)} \|\pa_\beta g(x,\beta)\|\nonumber\\
&\les& \Big\{\sup_{z} |z|\phi^\A(z) +\sup_{z} \Phi^\A(z)\frac{\phi(z)}{\Phi(z)}\Big\} \|\pa_\beta g(x,\beta)\|\label{l3.3a}\\
&\lesssim&\sup_{\theta \in \Theta} \|\pa_\beta g(x,\beta)\|.\label{l3.3}
\end{eqnarray}
Since $h_v$ and $h_u$ given in (\ref{A3}) and (\ref{A4}), respectively, are continuous and $\Theta$ is compact, $h_v$ and $h_u$ are bounded below and above. Thus, it is readily shown that
\[ |D_{1,\mu}|+|D_{1,v}|+|D_{1,u}|\les |\Delta_1|+|\Delta_1|^2+(1+|\Delta_2|)\frac{\phi(\Delta_2)}{\Phi(\Delta_2)}.\]
Using this and Lemma \ref{L.2}, we can show that
\begin{eqnarray}\label{l3.3b}
f_\theta^{\A-1}\big\{ |\pa_\mu f_\theta|+|\pa_{\sv} f_\theta|+|\pa_{\su} f_\theta|\big\}\les 1+\sup_{z} (|z|+|z|^2)\phi^\A(z) +\sup_z \Phi^\A(z)(1+|z|)\frac{\phi(z)}{\Phi(z)} \les 1,
\end{eqnarray}
which together with (\ref{l3.3}) implies (\ref{l3.1}).

We next derive an upper bound of the second derivatives. Since $\Theta$ is compact and
\begin{eqnarray*}
\pa_\beta D_{1,\beta}&=&\frac{1}{\sigma^2}\left\{-1+\lambda^2\left(-\Delta_2 \frac{\phi(\Delta_2)}{\Phi(\Delta_2)}- \frac{\phi^2(\Delta_2)}{\Phi^2(\Delta_2)}\right)\right\}\pa_\beta g(x,\beta),
\end{eqnarray*}
we have
\[ |D_{1,\beta}|+|D^2_{1,\beta}|\les |\Delta_1|+|\Delta_1|^2+\frac{\phi(\Delta_2)}{\Phi(\Delta_2)}+ \frac{\phi^2(\Delta_2)}{\Phi^2(\Delta_2)}\]
and
\[\|\pa_\beta D_{1,\beta}\|\les \left(1+|\Delta_2|\frac{\phi(\Delta_2)}{\Phi(\Delta_2)}+ \frac{\phi^2(\Delta_2)}{\Phi^2(\Delta_2)}\right)\|\pa_\beta g(x,\beta)\|.\]
Hence, using these and a similar method as for (\ref{l3.3}), we have that
\begin{eqnarray}\label{l3.4}
f_\theta^{\A-1} \|\pa^2_{\beta\beta^T} f_\theta\|&\leq&f_\theta^\A \| D^2_{1,\beta}\pa_\beta g(x,\beta)\pa_{\beta^T}g(x,\beta)+\pa_\beta g(x,\beta) \pa_{\beta^T}  D_{1,\beta} +  D_{1,\beta} \pa^2_{\beta \beta^T} g(x,\beta)\| \nonumber \\
&\les& \sup_{\theta\in\Theta}\|\pa_\beta g(x,\beta)\pa_{\beta^T}g(x,\beta)\|+\sup_{\theta\in\Theta}\|\pa^2_{\beta \beta^T} g(x,\beta)\|.
\end{eqnarray}
Furthermore, noting that
\begin{eqnarray*}
\pa_\mu  D_{1,\beta}&=&\frac{1}{\sigma^2}\left\{1-\Delta_2\frac{\phi(\Delta_2)}{\Phi(\Delta_2)}-\frac{\phi^2(\Delta_2)}{\Phi^2(\Delta_2)}\right\},\\
\pa_{\sv} D_{1,\beta}&=&-\frac{\sv}{\sigma^3}\left(\Delta_1+\lambda\frac{\phi(\Delta_2)}{\Phi(\Delta_2)}\right)+\frac{1}{\sigma}\left\{-\frac{\sv}{\sigma^2}\Delta_1
-\frac{\su}{\sv^2}\frac{\phi(\Delta_2)}{\Phi(\Delta_2)}+\lambda\left(-\Delta_2\frac{\phi(\Delta_2)}{\Phi(\Delta_2)}-\frac{\phi^2(\Delta_2)}{\Phi^2(\Delta_2)}\right)\pa_{\sv}\Delta_2\right\},\\
\pa_{\su} D_{1,\beta}&=&-\frac{\su}{\sigma^3}\left(\Delta_1+\lambda\frac{\phi(\Delta_2)}{\Phi(\Delta_2)}\right)+\frac{1}{\sigma}\left\{-\frac{\su}{\sigma^2}\Delta_1
+\frac{1}{\sv}\frac{\phi(\Delta_2)}{\Phi(\Delta_2)}+\lambda\left(-\Delta_2\frac{\phi(\Delta_2)}{\Phi(\Delta_2)}-\frac{\phi^2(\Delta_2)}{\Phi^2(\Delta_2)}\right)\pa_{\su}\Delta_2\right\},
\end{eqnarray*}
we can show
\begin{eqnarray}\label{l3.5}
&&f_\theta^{\A-1} \big\{\|\pa^2_{\beta\mu} f_\theta\|+\|\pa^2_{\beta\sv} f_\theta\|+\|\pa^2_{\beta\su} f_\theta\|\big\} \nonumber \\
&&\les f_\theta^\A\big\{|D_{1,\mu} D_{1,\beta} +\pa_{\mu}D_{1,\beta}|+|D_{1,v} D_{1,\beta} +\pa_{\sv}D_{1,\beta}|+|D_{1,u} D_{1,\beta} +\pa_{\su}D_{1,\beta}|\big\} \|\pa_\beta g(x,\beta)\| \nonumber \\
&&\les \sup_{\theta\in\Theta}\|\pa_\beta g(x,\beta)\|
\end{eqnarray}
By a simple calculation, it is straightforward to show that  $\pa_{\theta_j} D_{1,\theta_k}$, where $\theta_j, \theta_k \in \{\mu,\sigma_v,\sigma_u\}$, is dominated by a polynomial function of $\Delta_1, \Delta_2, \frac{\phi(\Delta_2)}{\Phi(\Delta_2)}$ and $\Delta_2 \frac{\phi(\Delta_2)}{\Phi(\Delta_2)}$. Thus, in a similar fashion to the above, we can verify that
\begin{eqnarray}\label{l3.6}
f_\theta^{\A-1} (y|x) |\pa^2_{\theta_j \theta_k} f_\theta (y|x)| \les 1.
\end{eqnarray}
Combining (\ref{l3.4})--(\ref{l3.6}), we establish (\ref{l3.2}).
\end{proof}

\noindent{\bf Proof of Proposition~\ref{P1}}\\
Note that
\begin{eqnarray*}
\int \|f_\theta^\A (y|X) \pt f_\theta (y|X)\|dy \leq \int f_\theta^{\A-1} (y|X) \|\pt f_\theta (y|X)\| f_\theta(y|X) dy \leq \sup_{y,\theta} f_\theta^{\A-1} (y|X)\|\pt f_\theta (y|X)\|.
\end{eqnarray*}
Then, we have
\begin{eqnarray}\label{P1.0}
\|\pt H(\theta)\|&=&(1+\A)\Big\|\int f_\theta^\A (y|X) \pt f_\theta (y|X)dy -f_{\theta}^{\A-1}(Y|X) \pt f_\theta (Y|X)\Big\|\nonumber\\
&\les& \sup_{y,\theta} f_\theta^{\A-1} (y|X) \|\pt f_\theta (y|X)\|
\end{eqnarray}
and thus, by (\ref{l3.1}),
\begin{eqnarray}\label{P1.1}
\|\pt H(\theta) \pa_{\theta^T} H(\theta)\| =\|\pt H(\theta)\|^2 \les 1+\sup_{\theta} \|\pa_\beta g(X,\beta) \pa_{\beta^T}g(X,\beta)\|.
 \end{eqnarray}
  Next, it follows from (\ref{l3.1}) that
\[\sup_{y,\theta} f_\theta^{\A-2} (y|X) \|\pt f_\theta (y|X) \pa_{\theta^T}f_\theta (y|X)\|=\Big(\sup_{y,\theta} f_\theta^{\A/2-1} (y|X) \|\pt f_\theta (y|X)\|\Big)^2\les 1+\sup_{\theta} \|\pa_\beta g(X,\beta) \pa_{\beta^T}g(X,\beta)\|.\]
Using this and (\ref{l3.2}), we have
\begin{eqnarray}\label{P1.2}
\frac{1}{1+\A}\|\ptt H(\theta)\|&\leq&
\int \big\{\A f_\theta^{\A-2}(y|X) \|\pt f_\theta(y|X) \pa_{\theta^T} f_\theta(y|X)\|+ f_\theta^{\A-1}(y|X)\|\ptt f_\theta(y|X)\|\big\} f_\theta(y|X) dy\nonumber\\
&&\quad+|\A-1|f_\theta^{\A-2}(Y|X)\|\pt f_\theta(Y|X) \pa_{\theta^T} f_\theta(Y|X)\|+f_\theta^{\A-1} (Y|X)\|\ptt f_\theta(Y|X)\|\nonumber\\
&\les& \sup_{y,\theta} f_\theta^{\A-2} (y|X) \|\pt f_\theta (y|X) \pa_{\theta^T}f_\theta (y|X)\|+\sup_{y,\theta} f_\theta^{\A-1} (y|X) \|\ptt f_\theta(y|X)\|\nonumber\\
&\les& 1+\sup_{\theta} \|\pa_\beta g(X,\beta) \pa_{\beta^T}g(X,\beta)\| +\sup_{\theta} \|\pa^2_{\beta \beta^T} g(X,\beta)\|.
\end{eqnarray}
Hence, the proposition follows from (\ref{P1.1}) and (\ref{P1.2}).
\hfill{$\Box$}\\

\vspace{3cm}
\noindent{\bf References}
\begin{description}
\item Basu, A., Harris, I. R., Hjort, N. L. and Jones, M. C. (1998). Robust and efficient estimation by minimizing a density power divergence. {\it Biometrika}, {\bf 85}, 549-559.
\item Battese, G. E. and Coelli, T. J. (1995). A model for technical inefficiency effects in a stochastic frontier production function for panel data. {\it Empirical Economics}, {\bf 20}, 325-332.
\item Cazals, C., Florens, J. P. and Simar, L. (2002). Nonparametric frontier estimation: a robust approach. {\it Journal of Econometrics}, {\bf 106}, 1-25.
\item Cichocki, A. and Amari, S. (2010). Families of alpha- beta- and gamma- divergences: flexible and robust measures of similarities. {\it Entropy}, {\bf 12}, 1532-1568.
\item Durio, A. and Isaia, E. D. (2011). The minimum density power divergence approach in building robust regression models. {\it Informatica}, {\bf 22}, 43-56.
\item Farrell, M. J. (1957). The measurement of productivity efficiency. {\it Journal of Royal Statistical Society. Series A}, {\bf 120}, 253-267.
\item Ferguson, T. S. (1996). {\it A course in large sample theory}. New York: Chapman \& Hall/CRC.
\item Florens, J. P. and Simar, L. (2005). Parametric approximations of nonparametric frontiers. {\it Journal of econometrics}, {\bf 124}, 91-116.
\item Fujisawa, H. and Eguchi, S. (2006). Robust estimation in the normal mixture model. {\it Journal of Statistical Planning and Inference}, {\bf 136}, 3989- 4011.
\item Jondrow, J., Lovell, C. A. K., Materove, I. S. and Schmidt, P. (1982). On the estimation of technical inefficiency in the stochastic frontier production function model. {\it Journal of Econometrics}, {\bf 19}, 233-238.
\item Ju\'{a}rez, S. F. and Schucany, W. R. (2004). Robust and efficient estimation for the generalized pareto distribution. {\it Extremes}, {\bf 7}, 237-251.
\item Kim, B. and Lee, S. (2013). Robust estimation for copula parameter in SCOMDY models. {\it Journal of Time Series Analysis}, {\bf 34}, 302-314.
\item Kneip, A., Simar, L. and Van Keilegom, I. (2015). Frontier estimation in the presence of measurement error with unknown variance. {\it Journal of Econometrics}, {\bf184}, 379-393.
\item Kopp, R. J. and Mullahy, J. (1990). Moment-based estimation and testing of stochastic frontier models. {\it Journal of Econometrics}, {\bf 46}, 165-183.
\item Kumbhakar, S. C., Park, B. U., Simar, L. and Tsionas, E. G. (2007). Nonparametric stochastic frontiers: A local maximum likelihood approach. {\it Journal of Econometrics}, {\bf137}, 1-27.
\item Lee, S. and Song, J.(2009). Minimum density power divergence estimator for GARCH models. {\it Test}, {\bf 18}, 316-341.
\item Lee, S. and Song, J. (2013). Minimum density power divergence estimator for diffusion processes. {\it Annals of the Institute of Statistical Mathematics}, {\bf 65}, 213-236.
\item Ling, S. and McAleer, M. (2010), A general asymptotic theory for time-series models. {\it Statistica Neerlandica}, {\bf 64}, 97-111.
\item Pardo, L. (2006). {\it Statistical Inference Based on Divergence Measures}, Chapman and Hall/CRC.
\item Park, B.U. and Simar, L. (1994). Efficient semiparametric estimation in a stochastic frontier model. {\it Journal of the American Statistical Association}, {\bf89}, 929-936.
\item Park, B.U., Sickles, R. C. and Simar, L. (1998). Stochastic panel frontiers: a semiparametric approach. {\it Journal of Econometrics}, {\bf 84}, 273-301.
\item P\'{o}lya, G. (1949). Remarks on computing the probability integral in one and two dimensions. {\it In Proceedings of the 1st
Berkeley Symposium on Mathematics Statistics and Probabilities}, 63-78.
\item Simar, L. (2003). Detecting outliers in frontier models: a simple approach. {\it Journal of Productivity Analysis}, {\bf 20}, 391-424.
\item Van den Broeck, J., Koop, G., Osiewalski, J. and Steel, M. (1994). Stochastic frontier models: a Bayesian perspective. {\it Journal of Econometrics}, {\bf61}, 273-303.
\item Warwick, J. and Jones, M.C. (2005). Choosing a robustness tuning parameter. {\it Journal of Statistical Computation and Simulation}, {\bf 75}, 581-588.
\item Wilson, P. W. (1993). Detecting outliers in deterministic nonparametric frontier models with multiple outputs. {\it Journal of Business and Economic Statistics}, {\bf 11}, 319-323.
\item Wilson, P.W. (1995). Detecting influential observations in data envelopment analysis. {\it Journal of Productivity Analysis}, {\bf 6}, 27-45.

\end{description}
\end{document}